\begin{document}
\thispagestyle{empty}
\title{Less-excludable Mechanism for DAOs in Public Good Auctions}
%
%
\author{Jing Chen \and Wentao Zhou*}
%
\authorrunning{J. Chen et al.}
%
\institute{Department of Computer Science and Technology, Tsinghua University,\\ Beijing 100084, China \\
\email{jchencs@tsinghua.edu.cn,\ zhouwt24@mails.tsinghua.edu.cn}}

\maketitle              
\begin{abstract}
With the rise of smart contracts, decentralized autonomous organizations (DAOs) have emerged in public good auctions, allowing ``small'' bidders to gather together and enlarge their influence in high-valued auctions.
However, models and mechanisms in the existing research literature do not guarantee {\em non-excludability}, which is a main property of public goods. As such, some members of the winning DAO may be explicitly prevented from accessing the public good.
This side effect leads to regrouping of small bidders within the DAO to have a larger say in the final outcome.
In particular,
we provide a polynomial-time algorithm
to compute the best regrouping of bidders that maximizes the total bidding power of a DAO. We also prove that such a regrouping is less-excludable, better aligning the needs of the entire DAO and the nature of public goods.

Next, notice that members of a DAO in public good auctions often have a positive externality among themselves.
Thus we introduce a {\em collective factor} into the members' utility functions. We further extend the mechanism's allocation for each member to allow for {\em partial access} to the public good.
Under the new model, we propose a mechanism that is incentive compatible in generic games and achieves higher social welfare as well as less-excludable allocations.

\keywords{Mechanism design  \and DAO  \and Public good auction  \and Non-excludability.}
\end{abstract}

\section{Introduction}
The combination of blockchains and auctions has become increasingly prevalent in recent years, because blockchain features such as decentralization, data immutability and smart contracts naturally provide convenience for auctions. A representative example is the emergence of Decentralized Autonomous Organizations (DAOs) \cite{buterin2013bootstrapping, buterin2014daos} and their application in public good auctions.

DAO is a new organizational framework in which management and operation rules are encoded in smart contracts on the blockchain. It can unite a group of people for a common goal in a decentralized format, such as fund-raising and auctions \cite{hassan2021decentralized, wang2019decentralized}.
By 2022, the scale of DAOs had reached 1.6 million participants, collectively managing over 13,000 DAOs with a total value of \$16 billion \cite{jenkinson2022remote,tan2023open}.
By 2024, the total treasury of DAOs reached \$37.6 billion \cite{deepdaoOrganizations}, a phenomenon that cannot be ignored in the real world.


\setcounter{page}{1}  
\pagenumbering{arabic}

In both on-chain and off-chain auctions, DAOs can be treated as bidders. Their internal structures allow them to gather more power from individuals and participate in the auction in a more efficient way.
DAOs are especially closely associated with auctions of public goods, as public goods have two key characteristics, {\em non-rivalry} and {\em non-excludability} \cite{kaul1999defining},
which align well with the motivation of DAOs and the permissionless and egalitarian nature of blockchains.
It's thus easy to form DAOs for such purposes. For example, in 2021, when Sotheby's auctioned off one of the original copies of the United States Constitution, more than 17,000 people formed the ConstitutionDAO and raised more than \$47 million to join the auction by smart contracts deployed on Ethereum. Although the ConstitutionDAO narrowly failed to win the auction, this new form of participation has been recognized by traditional auction houses.

The participation of DAOs in an auction requires a two-level mechanism as suggested by \cite{bahrani2023bidders}. The upper-level mechanism runs by the auctioneer and treats each DAO as an individual bidder participating in the auction. The lower-level mechanism runs by each DAO to aggregate the DAO's members and act on behalf of them.
This includes collecting the members' bids to form a total bid that is submitted to the upper-level mechanism and, in case of winning the auction in the latter, distributing the access to the item and the corresponding total payment among members.
The mechanism introduced by \cite{bahrani2023bidders} is incentive compatible (IC) for each member of a DAO and is approximately optimal for the social welfare of the auction. However, it has to give up non-excludability in order to achieve IC and other good properties, which means that even when a DAO wins the auction, some small members of it still cannot get access to the item.
This unfortunate side-effect puts the original intention of DAOs at question and leaves space for small members to form another (smaller) DAO among themselves so that they can collectively have a larger say in the larger DAO.
Although game theory often assumes players' independent behavior, a world of DAOs intrinsically leverages the collective power of small members and
one should take such collective behavior into consideration when studying DAOs.


\subsection{Our Contributions}
As non-excludability is a crucial characteristic of public goods, we try to highlight it in our study and balance it with other conflicting goals. With the model and the mechanism of \cite{bahrani2023bidders}
introduced in Section~\ref{s2}, we shall leverage the collective behavior of a DAO's members in two ways.

In Section~\ref{s3}, from an algorithmic point of view, we consider how the members of a DAO in the original mechanism may regroup themselves to build a better substructure within the DAO.
Interestingly, such a collusive behavior doesn't undermine the power of the entire DAO but cooperatively helps the DAO to make a higher bid, besides improving its non-excludability.
For the grouping problem, we construct an efficient algorithm and show that the optimal subgroups within a DAO can be computed in polynomial time.
An open question is: why stopping at two levels?
If some small members are still excluded from accessing the public good given the subgroups, they may continue this procedure to form a smaller DAO within the subgroups within the original DAO, etc, until they are influential enough in the small group and their power gets amplified by their groups in higher levels.
Indeed, the grouping problem can have many levels. But since people usually do not think that far in their reasoning and actions, a two-layer grouping problem is our main focus in this study and we leave the hierarchical consideration to the future.





In Section~\ref{s4}, from a game-theoretic point of view,
we generalize the standard quasi-linear utility model and add a collective factor to a DAO's members' utility function.
This reflects the fact that members of a DAO in a public good auction may have a positive externality among themselves and care about whether others also get access to the item.
Moreover, we extend the auction's allocation space from binary access
to allowing for partial access, so that some small members who were
completely prevented from accessing the item can now be given access to some extent.
For example, a museum holding the auctioned public good (in case of an antique) may be entirely free to some members of the DAO and free for several months every year to other members.
In reality, the format of the partial access can be flexible depending on the nature of the public good, such as lotteries, time-sharing or space-sharing.
In the generalized model, we show a new mechanism that satisfies individual rationality, budget balance, and equal treatment, which are properties of the previous mechanism in \cite{bahrani2023bidders}, as well as incentive compatibility in {\em generic games}.%
\footnote{Recall that in game theory a generic game is where there is no critical tie.}
Compared with the previous mechanism, our mechanism improves both the non-excludability and the social welfare of the auction.
This is a first step towards achieving full non-excludability when DAOs participate in public good auctions, an open problem that deserves further studies.


\subsection{Related Work}
\noindent\textbf{Public Good Auctions.}
Many previous studies have shown by experiments that when public goods are involved, individuals are influenced by social factors and cannot be entirely modeled as self-interested agents \cite{ledyard1994public, fehr2003nature}.
The literature on the social efficiency of public good auctions is also extensive.
For example, \cite{goeree2005not} proved that a winner-pays-only mechanism was not sufficiently effective, while a full payment mechanism offered certain advantages in the context of public goods fund-raising auctions.
For a non-divisible public good which can be provided by different players at different costs, \cite{Kleindorfer1994} defined a non-cooperative game and proposed an auction mechanism that is efficient, fair and incentive compatible, but does not have a uniquely determined payment for each player. The cooperative form was further studied by \cite{Dehez2013}.
These studies took non-excludability as a given property and focused on other aspects of public good auctions.
We instead highlight the impact of non-excludability and consider to what degree it can be achieved, both algorithmically and via incentive-compatible mechanisms.
Recently \cite{li2023altruism} modified the utility function in binary network public goods games to reflect various prosocial motivations such as altruism and collectivism that influence individual decision making. The collective utilities we consider for DAOs' members are inspired by their work, while our objectives are different.



\smallskip
\noindent\textbf{DAOs in Auctions.}
Public good auctions with DAOs as participants were considered by \cite{bahrani2023bidders} as a two-level mechanism. DAOs have their operation rules encoded in smart contracts and act accordingly in collecting member bids, computing the total bid and computing the final allocation and prices for the members. Thus only the members of DAOs are strategic, not the DAOs in between members and the auctioneer. In order to achieve incentive compatibility, their mechanism has to explicitly forbid some members of the winning DAO from accessing the good, thus violating non-excludability.
In \cite{rachmilevitch2022auctions}, the author considered a similar two-level model but focused on analyzing the equilibrium structure for DAOs participating in first- and second-price auctions, and only studied the scenario where a single DAO competed with an individual bidder.
Moreover, \cite{tan2023open} summarized many open problems about DAOs, including the design and analysis of incentive mechanisms for providing different types of public goods.
Our work builds upon \cite{bahrani2023bidders} and aims instead for less-excludable public good auctions.

\smallskip
\noindent\textbf{Cooperative Game Theory.}
When resolving the confliction between incentive compatibility and non-excludability,
we took inspiration from \cite{aumann1985game} for bankruptcy in cooperative game theory,
even though the problems considered are different.
Finally, the subgroups of DAOs' members resemble in spirit coalition forming in cooperative game theory \cite{Dehez2013, ray2015coalition, chalkiadakis2011computational}, but we took a pure algorithmic approach to compute the optimal subgrouping. It would be interesting to further explore the strategic behavior when forming subgroups and establish deeper connections with concepts such as the core.






\section{Preliminaries} \label{s2}
Using the model in \cite{bahrani2023bidders}, an auctioneer is selling a single public good to DAOs and individual bidders. Since an individual bidder can be seen as a DAO with only one member, in the rest of the paper the auction is conducted with $m$ DAOs as bidders.
For a DAO $G$, we may slightly abuse notations and let $G$ also denote the set of its members.
Let $n_G = |G|$. When $G$ is clear from the context, $n$ is used for short and $G=[n]$.
Each member $i\in [n]$ has a true value $v_i$ for the good, which can also be seen as its individual {\em willingness to pay} (WTP). It makes a bid $b_i$ to the DAO.
Let $v_{-i}$ and $b_{-i}$ be the true value vector and the bid vector for {\em all members of all DAOs except $i$.}


The two-level mechanism $M=(M_u, M_\ell)$ of \cite{bahrani2023bidders} can be described as follows.
The upper-level mechanism $M_u$ is a second-price mechanism run by the auctioneer, where each DAO acts as a bidder, the highest bidder wins the good and pays the second highest bid. Each DAO $G$ makes a bid $WTP_{total, G}$ ---the {\em willingness to pay} by the entire DAO, which is computed by the DAO from its members' individual bids and is not necessarily the sum of the members' true values or bids.
The auctioneer collects all DAOs' bids and then informs each DAO $G$ of its allocation $X_{total, G}\in \{0, 1\}$ and price $P_{total, G}$. Again when $G$ is clear from the context, the subscript $G$ is removed from the notations.

The lower-level mechanism $M_\ell$ is run by each DAO $G$ and contains three parts: an aggregation function $WTP_{total}(b_1,\dots,b_n)$ that summarizes all members' bids into a total bid of the DAO and submits to the upper-level mechanism; an access function $x(b_1,\dots,b_n,X_{total},P_{total})=(x_1,\dots,x_n)$, where $\forall i \in [n]$, $x_i\in\{0,1\}$ represents whether member $i$ receives access to the good, with $x_i=0$ whenever $X_{total}=0$; and
a cost-sharing function $p(b_1,\dots,b_n,X_{total},P_{total})=(p_1,\dots,p_n)$, where $\forall i \in [n]$, $p_i$ represents the amount that member $i$ should pay.
Notice that the public good can only be allocated to at most one {\em winning DAO} by the upper-level mechanism, but all members of the winning DAO are able to access it unless explicitly excluded from doing so.
The utility function for each member $i$ can be written as $u_i=v_i\cdot x_i-p_i$. The social welfare of the auction can be written as $SW=\sum_{i\in[n]} v_i\cdot x_i$, where the sum is taken over the members of the winning DAO.
The optimal social welfare is defined as $OPT\_SW = \max_G \sum_{i\in G} v_i$, which assumes access by all members of the DAO that receives the good, inline with non-excludability.

The mechanism design problem focuses on the lower-level mechanism, more specifically the three functions $WTP_{total}, x$, and $p$.
Here are some ideal properties for a mechanism to pursue: for every DAO $G$ and for all instances of the auction,

\begin{itemize}
\item[1.] {\em Incentive compatibility:} $u_i(v_i,b_{-i})\geq u_i(b_i,b_{-i})$, $\forall i\in[n]$, $\forall b_i$, $b_{-i}$.

\item[2.] {\em Budget balance:} $\sum_{i\in [n]}{p_i} = P_{total}$.

\item[3.] {\em Individual rationality:} $u_i(v_i,b_{-i})\geq 0$, $\forall i\in[n]$, $\forall b_{-i}$.

\item[4.] {\em Equal treatment:} if $ b_i = b_j$, then $x_i=x_j$ and $p_i=p_j$, $\forall i,j \in [n]$, $i\neq j$.


\item[5.] {\em Non-excludability:} if $X_{total}=1$ then $x_i=1$ $\forall i\in [n]$.
\end{itemize}

The mechanism $M_\ell$ of \cite{bahrani2023bidders} satisfies all properties above except non-excludability, and approximately maximizes the social welfare.
More specifically, given a DAO $G$, without loss of generality the members are ordered according to their bids non-increasingly, so that $b_1\geq b_2 \geq \cdots \geq b_n$.
According to $M_\ell$,
$G$'s bid is defined as $WTP_{total}=\max_{i\in [n]}\{i\cdot b_i\}$. If $G$ wins in the upper mechanism (i.e., $X_{total}=1$), then find $i^*=\max_{i\in [n]}\{i|i\cdot b_i\ge P_{total}\}$.
The access function and the cost-sharing function are as follows: $\forall i\in [n]$,

\begin{equation}
    x_i = \begin{cases}
        1, & i\le i^*\\
        0, & i>i^*
    \end{cases}\quad \quad \mbox{and} \quad \quad
    p_i = \begin{cases}
    \frac{P_{total}}{i^*}, & i\le i^*\\
    0,  & i>i^*
    \end{cases}
    .
\end{equation}
If DAO $G$ does not win the auction, then $x_i=0$ and $p_i=0$ $\forall i \in [n]$.
As shown by \cite{bahrani2023bidders}, this mechanism
is an $H_\ell$-approximation to the optimal social welfare, namely $\frac{OPT\_SW}{SW}\leq H_\ell$ for all instances of the auction and truthful bids, where $\ell =\max_G |G|$ is the size of the largest DAO and $H_\ell=\sum_{i=1}^\ell\frac{1}{i}$ is the $\ell$-th harmonic number.

The above highlights the advantages of the mechanism.
However, its access function indicates that some members with low bids in the winning DAO may still be prevented from accessing the public good, giving up non-excludability.
This makes it attempting for some members of a DAO to gather together to form subgroups, collectively pretending to be a single member of the DAO in order to gain greater influence in the outcome of the auction; see Fig.~\ref{fig1} for such an example.
In particular, the other key characteristic of public goods, non-rivalry, indicates that
the good can be consumed by multiple people without reducing the amount available to others.%
\footnote{For example, an antique in a museum can be seen by one visitor without reducing how it shows up to other visitors.}
Thus excluding some members from access is
only a mean to achieve incentive compatibility and other properties above.
Examining the problem from a different angle and highlighting the nature of public goods, we consider how non-excludability can be improved while other properties may be weakened.

\begin{figure}
    \centering
    \includegraphics[width=0.48\textwidth, keepaspectratio]{./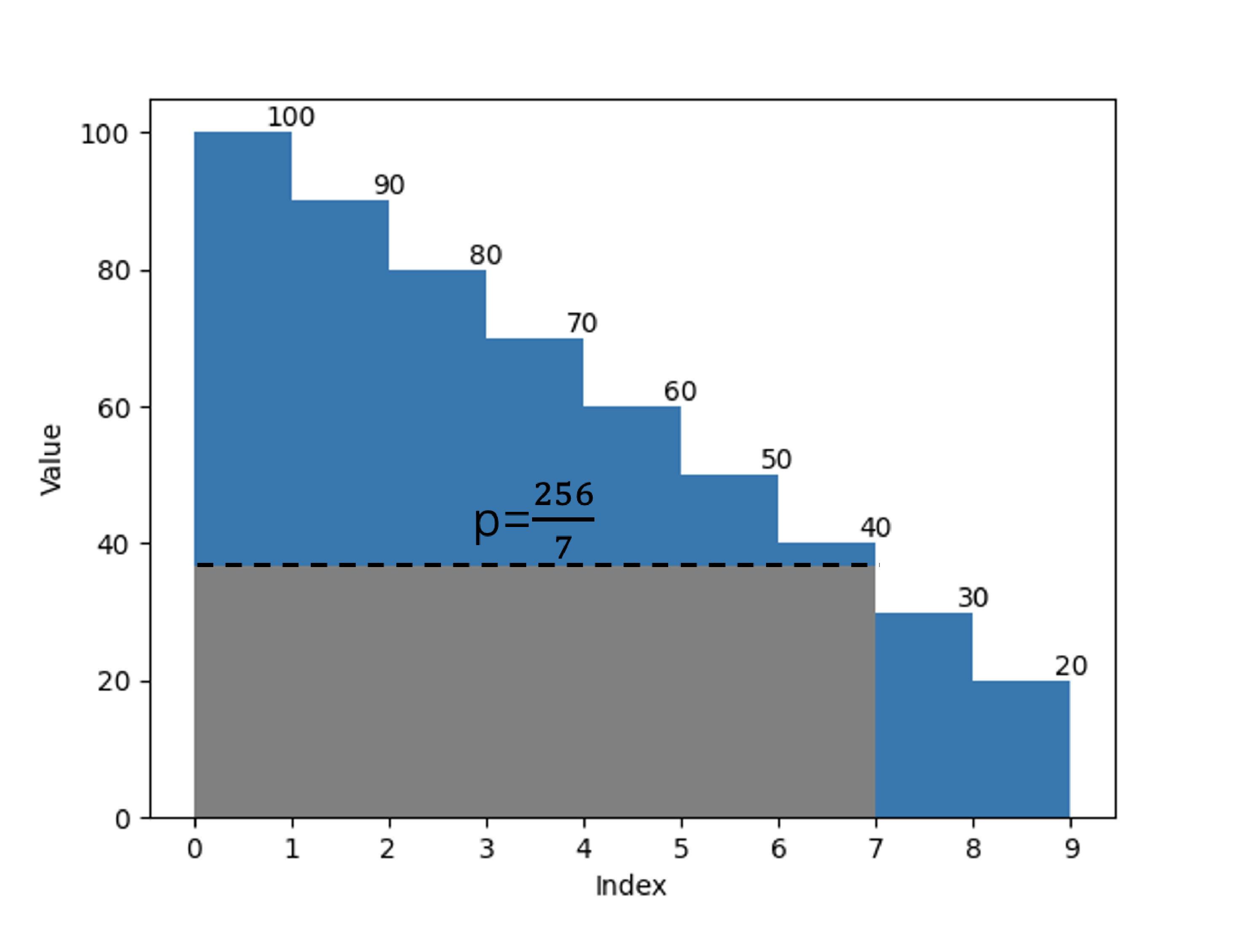}
    \includegraphics[width=0.48\textwidth, keepaspectratio]{./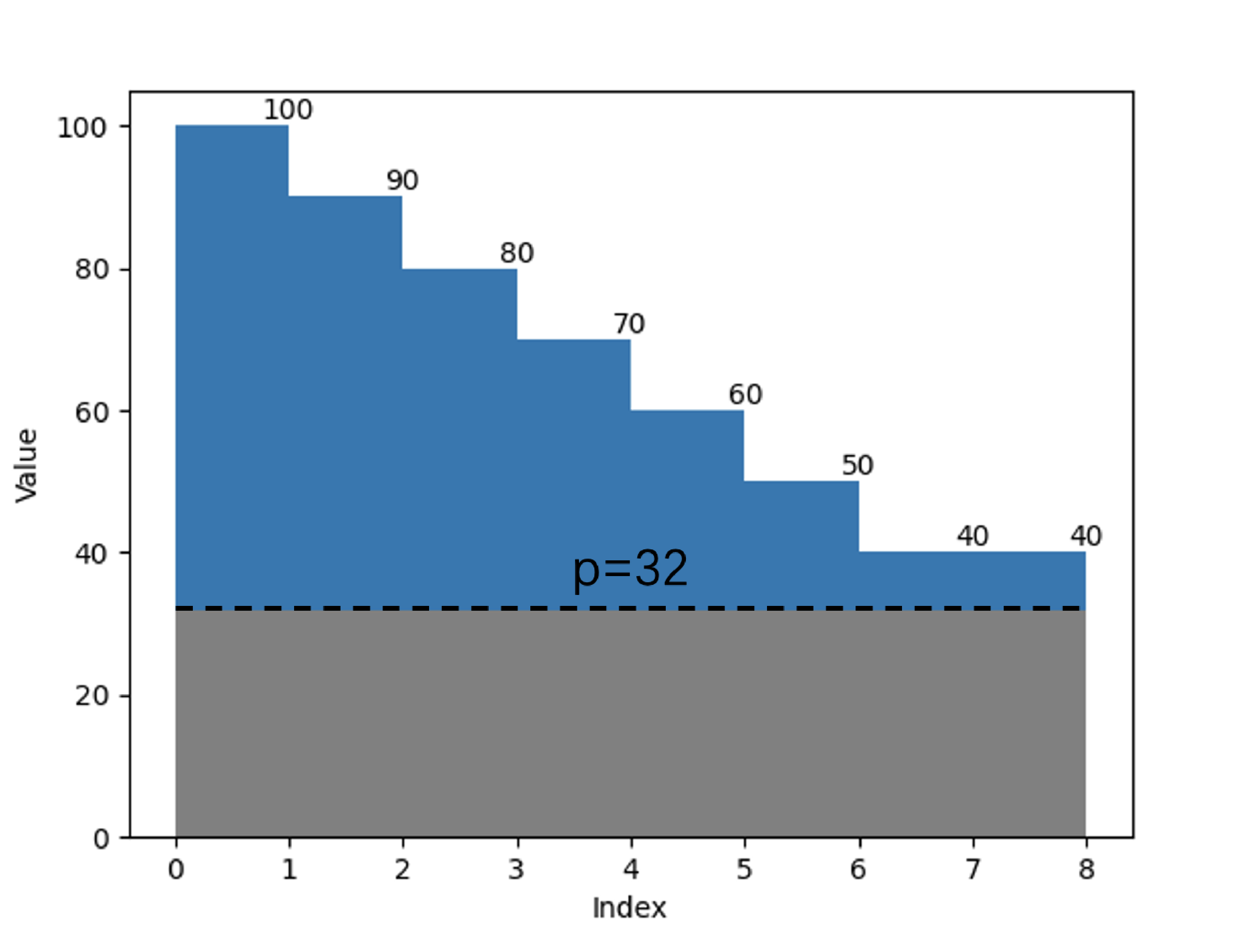}
    \caption{In this example, a DAO has 9 members with valuations 100, 90, $\dots$, 20. Thus its $WTP_{total}=300$.
    If it wins in the upper-level second-price auction with $P_{total}=256$,
    then according to the mechanism, $i^*=7$ and $p_i = \frac{256}{7}$ $\forall i\leq 7$.
    As the left figure shows, the vertical axis represents the amount of bid and the horizontal axis represents the member index.
    The shaded gray area represents $P_{total}=i^*\cdot \frac{P_{total}}{i^*}$.
    The last two members with valuations of 30 and 20 have no access to the good despite of being part of the DAO.
    However, if these two members form a subgroup with willingness to pay $WTP'=\max\{1\times30, 2 \times 20\}=40$,
    then as the right figure shows, $i^*=8$ and $WTP'_{total} = 320$, actually improving the entire DAO's willingness to pay.
    With the same price $P_{total}$, this subgroup can get access and the payment is 32.
    Back in the subgroup, members can now get access and the payment is 16 each, with non-excludability improved.} \label{fig1}
\end{figure}

\section{Subgroups in a DAO and Their Organizations}\label{s3}
From an algorithmic perspective, in this section we consider how the members of a DAO may divide themselves into subgroups (e.g., some of them may form sub-DAOs)
to gain more access to the public good. As we'll show,
such a collective behavior improves both non-excludability and the bidding power of the entire DAO.
%
%
%
Since we are only interested in algorithms instead of strategic mechanisms here,
we consider all members to bid at their true values, that is, $b_i=v_i$. Wlog, the members of a DAO are ordered according to their values non-increasingly, namely $v_1\geq v_2\geq \cdots\geq v_n$.
Below we first introduce the model for a two-level DAO structure with subgroups and use an example to show that the DAO's total willingness to pay can be greatly improved by subgroups.


\subsection{Two-Level DAOs}\label{s3.1}
Inside a DAO $G$, its $n$ members can be further partitioned into $k$ subgroups that are disjoint from each other, referred to as a {\em grouping} of $G$ and denoted by $\mathcal{G}=\{G^1,\dots,G^k\}$, where $\bigcup_{j\in[k]}{G^j}=G$ and $G^j \cap G^{j'}=\emptyset$ $\forall j\neq j'$.
Let $n^j = |G^j|$ for each $j\in [k]$.
When focusing on a subgroup $G^j$, its members' valuations are also referred to as $v_1^j, \dots, v_{n^j}^j$.
Again wlog, the members of $G^j$ are ranked according to their valuations non-increasingly,
namely $v_1^j\geq v_2^j\geq \cdots \geq v_{n^j}^j$.
So a member $a\in [n]$ has the $a$-th highest valuation among all members of $G$ and, if $a$ belongs to the subgroup $G^j$ and has the $t$-th highest valuation in $G^j$, then $v_a = v_t^j$.

Each subgroup's valuation (or rather, its willingness to pay) is determined by the valuations of its members, and the DAO's valuation/willingness-to-pay is determined by those of its subgroups. Equal treatment will be maintained across all subgroups and within each subgroup.
More specifically, for each subgroup $G^j$, its valuation is $WTP^j=\max_{i\in [n^j]}\{i\cdot v_i^j\}$.

From the DAO's perspective, there are $k$ ``members'', $G^1,\dots,G^k$, whose valuations are $WTP^1,\dots,WTP^k$. Again these subgroups are ordered according to their valuations non-increasingly, namely $WTP^1\geq WTP^2 \geq \cdots\geq WTP^k$. The entire DAO's willingness to pay is now $WTP_{total}=\max_{j\in[k]}\{j\cdot WTP^j\}$, also written as $WTP_{total}(\mathcal{G})$ in order to emphasize the grouping $\mathcal{G}$.

If DAO $G$ wins the item at price $P_{total}$ with bid $WTP_{total}$, then for each of the two levels in the DAO, the access rule and the cost-sharing rule follow the original mechanism.
That is, firstly the subgroups of $\mathcal{G}$ get accesses and cost-shares according to the original mechanism as if they were individual members of the DAO. Then for each subgroup $G^j$ that has access to the item with a price~$p$, the same rules apply to its members as if they form a separate DAO.

As the example in Fig.~\ref{fig2} shows, by forming subgroups, more members can contribute and the willingness to pay of the entire DAO may be greatly improved.
Let the maximum willingness to pay of DAO $G$ over all possible groupings be
$opt\_WTP = \max_{k=1}^n \max_{\mathcal{G}=\{G^1, \dots, G^k\}} WTP_{total}(\mathcal{G})$.

In Sections~\ref{s3.3} and~\ref{s3.4}, we show that the optimal grouping whose willingness to pay reaches $opt\_WTP$
enjoys a nice structure and provide a polynomial time algorithm for it.
We prove the optimality of the algorithm and show that it improves non-excludability compared with
the original DAO without subgroups.

\begin{figure}[htbp]
\includegraphics[width=\textwidth, keepaspectratio]{./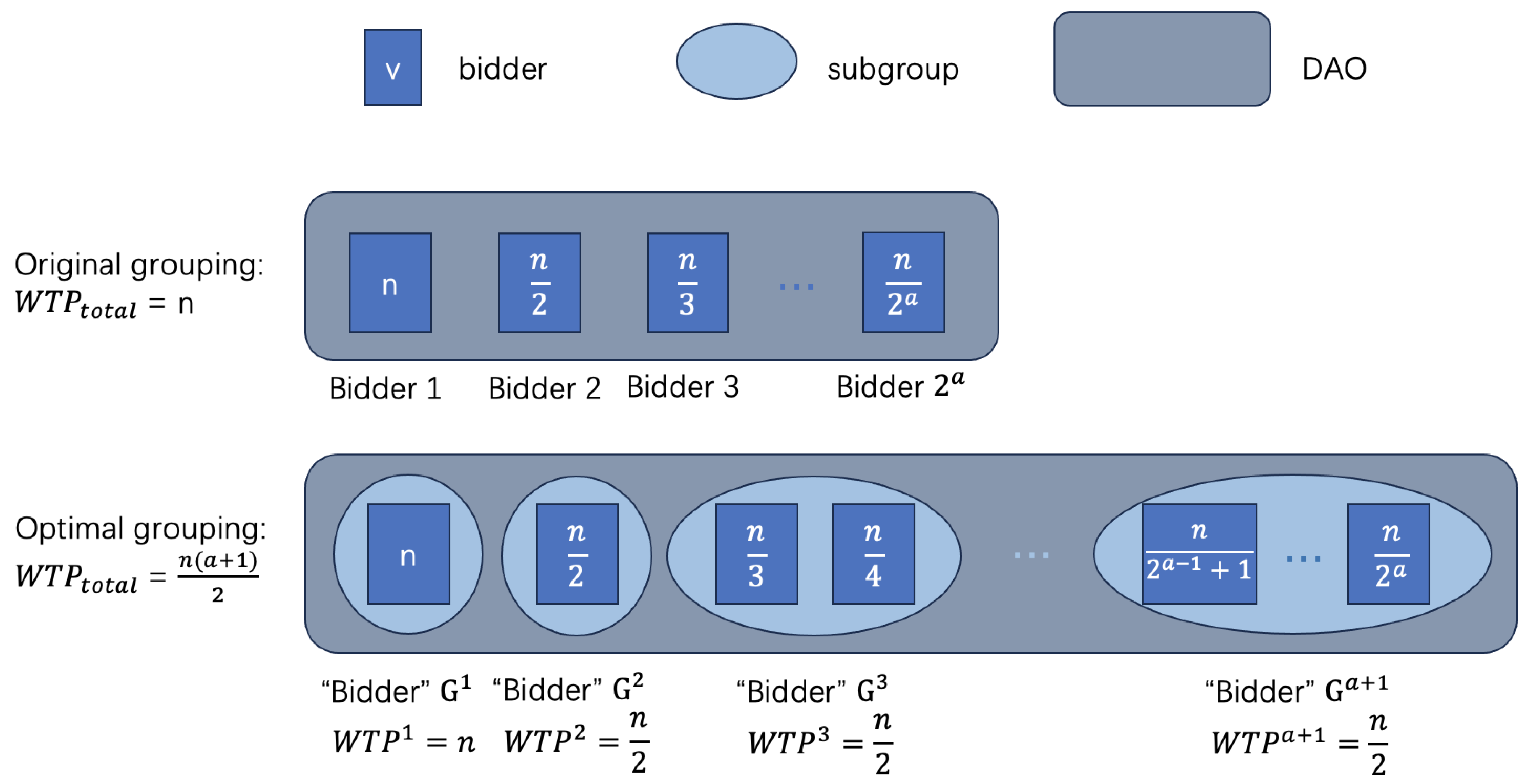}
\caption{
In this example, the DAO has $n=2^a$ members with valuations of $n, \frac{n}{2}, \frac{n}{3}, \dots, \frac{n}{2^a-1}, \frac{n}{2^a}$. Without any subgroup, $WTP_{total}=n$. If the members are partitioned into the following $a+1$ subgroups, $\{1\}, \{2\}, \{3, 4\}, \dots, \{2^{a-1}+1, \dots, 2^a=n\}$, with corresponding valuations $\{n\}, \{\frac{n}{2}\}, \{\frac{n}{3}, \frac{n}{4}\}, \dots, \{\frac{n}{2^{a-1}+1}, \dots, \frac{n}{2^a}\}$, then the total willingness to pay of the DAO becomes $\frac{n(a+1)}{2}$, improved by a logarithmic factor.}
\label{fig2}
\end{figure}

\subsection{The Optimal Grouping Algorithm}\label{s3.3}
We first identify some properties for the optimal grouping.

\begin{definition}
Given a grouping $\mathcal{G}=\{G^1, \dots, G^k\}$, if for some $G^i$, $G^j \in \mathcal{G}$, $G^i\ne G^j$, $\exists$ members $a, c\in G^i$, $ b \in G^j$,
such that the indices satisfy $a<b<c$, then we say that the grouping $\mathcal{G}$ has a \textbf{crossing}.
A grouping $\mathcal{G}$ is \textbf{continuous} if it does not have any crossing.
\end{definition}

Notice that a continuous grouping $\mathcal{G}$ would look like this:
$\{1, 2, \dots, N^1\}, \{N^1 +1, N^1+2, \dots, N^2\}, \dots, \{N^{k-1}+1, N^{k-1}+2, \dots, N^k=n\}$, where $N^j = n^1+n^2+\cdots+n^j$. That is, it consists of $k$ continuous intervals for the members of the DAO.
A contiguous grouping enjoys two natural ways to order its subgroups. On the one hand, the subgroups can be ordered according to their own valuations non-increasingly like before, which makes it easy to compute $WTP_{total}(\mathcal{G})$. On the other hand, they can be ordered by their members' valuations non-increasingly (i.e., by their members' indices non-decreasingly), which makes it easy to describe and analyze. By default the analysis below goes with the first ordering, unless explicitly mentioned.

\begin{theorem}\label{thm1}
For any DAO $G$ and any grouping $\mathcal{G}=\{G^1, \dots, G^k\}$ of $G$, there exists a continuous grouping $\mathcal{\overline{G}}=\{\overline{G}^1, \dots, \overline{G}^k\}$
such that $WTP_{total}(\mathcal{\overline{G}}) \geq WTP_{total}(\mathcal{G})$.
\end{theorem}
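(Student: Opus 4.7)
The plan is to construct $\overline{\mathcal{G}}$ explicitly from $\mathcal{G}$ and verify the WTP bound directly, rather than trying an iterative pairwise exchange argument. The reason exchange fails to be clean here is that swapping two members across two subgroups typically moves one subgroup's WTP up and another's down, and such non-monotone changes in the sorted WTP sequence can strictly decrease $\max_j j\cdot WTP^j$; a one-shot construction is more manageable.

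Let $W = WTP_{total}(\mathcal{G}) = j^{*}\cdot WTP^{j^{*}}$, reindexing the subgroups so that $WTP^1 \geq \cdots \geq WTP^k$ and $j^{*}$ attains $\max_j j\cdot WTP^j$. For each $\ell \leq j^{*}$, pick $t^\ell$ with $WTP^\ell = t^\ell\cdot v_{t^\ell}^\ell$, let $T^\ell \subseteq G^\ell$ be the top $t^\ell$ members of $G^\ell$, and write $m_\ell = v_{t^\ell}^\ell = \min_{i\in T^\ell} v_i$; by construction $m_\ell \geq WTP^\ell/t^\ell \geq W/(j^{*} t^\ell)$. Sort $T^1,\ldots,T^{j^{*}}$ non-increasingly by $m_\ell$ and denote the resulting order by $\sigma$. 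Construct $\overline{\mathcal{G}}$ as the continuous grouping whose first $j^{*}$ subgroups are $\overline{G}^\ell = \{S_{\ell-1}+1,\ldots,S_\ell\}$ with $S_\ell = \sum_{\ell'\leq \ell}|T^{\sigma(\ell')}|$ (DAO members are indexed by decreasing value throughout Section~\ref{s3}), and the remaining $n - S_{j^{*}}$ members fill the last $k-j^{*}$ continuous subgroups, each non-empty; feasibility holds because $\sum_{\ell'>j^{*}}|G^{\ell'}| \geq k-j^{*}$ forces $S_{j^{*}} \leq n-(k-j^{*})$.

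The verification reduces to a counting bound enabled by the choice of $\sigma$. For each $\ell \leq j^{*}$, the sets $T^{\sigma(\ell')}$ with $\ell' \leq \ell$ are pairwise disjoint and each contains only values $\geq m_{\sigma(\ell')} \geq m_{\sigma(\ell)}$, so the DAO has at least $S_\ell$ members of value $\geq m_{\sigma(\ell)}$. Therefore $v_{S_\ell} \geq m_{\sigma(\ell)}$, and
\[
WTP(\overline{G}^\ell) \geq |T^{\sigma(\ell)}|\cdot v_{S_\ell} \geq |T^{\sigma(\ell)}|\cdot m_{\sigma(\ell)} = t^{\sigma(\ell)}\cdot v_{t^{\sigma(\ell)}}^{\sigma(\ell)} = WTP^{\sigma(\ell)} \geq W/j^{*}.
\]
So $\overline{\mathcal{G}}$ contains at least $j^{*}$ subgroups of WTP $\geq W/j^{*}$, giving $WTP_{total}(\overline{\mathcal{G}}) \geq j^{*}\cdot W/j^{*} = W$, as required.

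The main obstacle I anticipate is picking the right permutation $\sigma$: sorting by $m_\ell$ is exactly what makes the telescoping in the counting argument tight, because it guarantees that the top $S_\ell$ DAO values dominate $T^{\sigma(1)}\cup\cdots\cup T^{\sigma(\ell)}$ value-wise and the $|T^{\sigma(\ell)}|\cdot v_{S_\ell}$ term recovers $WTP^{\sigma(\ell)}$. Alternative orderings by $|T^\ell|$ or by $WTP^\ell$ give weaker bounds that do not translate back to a useful estimate on $WTP(\overline{G}^\ell)$.
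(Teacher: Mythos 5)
Your proof is correct and takes essentially the same approach as the paper: reorder the subgroups by critical value, form continuous blocks whose sizes are the critical indices, and use the counting argument that the top $\sum_{\ell'\le\ell}CI^{\ell'}$ members of the DAO all have value at least the $\ell$-th critical value. The only difference is cosmetic --- the paper carries out the construction for all $k$ subgroups and shows component-wise dominance of the sorted WTP vectors, whereas you restrict attention to the $j^*$ subgroups that realize $WTP_{total}(\mathcal{G})$ and bound the maximum directly.
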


Theorem~\ref{thm1} is proved in Appendix~\ref{app}.
We immediately have the following.

\begin{corollary}\label{cor1}
For any DAO $G$, there exists a continuous grouping $\mathcal{G}$ that achieves its maximum willingness to pay, namely $WTP_{total}(\mathcal{G}) = opt\_WTP$.
\end{corollary}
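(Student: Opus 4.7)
The plan is to derive Corollary~\ref{cor1} as a direct consequence of Theorem~\ref{thm1}. Since $opt\_WTP$ is defined as a maximum over a finite collection of groupings (there are only finitely many partitions of the $n$ members into subgroups), this maximum is attained by some grouping $\mathcal{G}^*=\{G^{*1}, \dots, G^{*k}\}$ with $WTP_{total}(\mathcal{G}^*) = opt\_WTP$.

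Next, I would invoke Theorem~\ref{thm1} on $\mathcal{G}^*$ to produce a continuous grouping $\overline{\mathcal{G}}=\{\overline{G}^1, \dots, \overline{G}^k\}$ satisfying $WTP_{total}(\overline{\mathcal{G}}) \geq WTP_{total}(\mathcal{G}^*) = opt\_WTP$. On the other hand, by the very definition of $opt\_WTP$ as the maximum of $WTP_{total}$ over all groupings, we must have $WTP_{total}(\overline{\mathcal{G}}) \leq opt\_WTP$. Combining the two inequalities yields $WTP_{total}(\overline{\mathcal{G}}) = opt\_WTP$, so $\overline{\mathcal{G}}$ is the desired continuous grouping that achieves the maximum.

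There is no real obstacle here: the substantive work is all contained in Theorem~\ref{thm1} (which is relegated to the appendix), and the corollary is merely the observation that among all optimal groupings one can always be chosen to be continuous. The only detail worth mentioning explicitly is that the maximum defining $opt\_WTP$ is attained, which is immediate from the finiteness of the set of groupings of $G$.
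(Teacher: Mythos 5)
Your proof is correct and matches the paper's intent exactly: the paper states the corollary as an immediate consequence of Theorem~\ref{thm1}, and your argument (attainment of the maximum by finiteness, then applying Theorem~\ref{thm1} to an optimal grouping and sandwiching via the definition of $opt\_WTP$) is precisely the routine step being elided. Nothing is missing.
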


By Corollary~\ref{cor1}, it suffices to consider continuous groupings in order to compute $opt\_WTP$.
Our Algorithm~\ref{alg1} below indeed does so.
To intuitively see how it works, given a grouping $\mathcal{G}$,
we further define the {\em critical index} of $\mathcal{G}$ as $CI = \arg\max_{j\in [k]}\{j\cdot WTP^j\}$, breaking ties in favor of the largest $j$. The corresponding valuation $WTP^{CI}$ is referred to as the {\em critical value} of $\mathcal{G}$ and denoted by $CV$. Thus $WTP_{total}=CI\cdot CV$.

Algorithm~\ref{alg1} enumerates {\em all possible subgroups} and their valuations (the subgroup's willingness to pay).
The key is that, although the number of possible groupings is exponential, the number of distinct subgroups in all continuous groupings is only $O(n^2)$,
and their valuations are all the possible critical values for the optimal continuous grouping.
For each such valuation, assuming for a moment it is the critical value $CV$ of the optimal grouping,
we use a greedy algorithm to form as many subgroups as possible whose valuations are greater than or equal to $CV$. The number of such subgroups is then the critical index $CI$ of the resulting grouping,
and $WTP_{total} = CI\cdot CV$ is a possible value for $opt\_WTP$. We find $opt\_WTP$ by taking the maximum among all $WTP_{total}$'s, and then the corresponding $CV$ leads to
the optimal grouping $\mathcal{G}$.

\begin{algorithm}[h!]
	\small
	\caption{Optimal Grouping Algorithm}
	\label{alg1}
	\LinesNumbered 
	\KwIn{$v_1, \dots, v_n, \text{where}\ v_1 \ge \cdots \ge v_n$}
	\KwOut{$opt\_WTP,\ k,\ \mbox{and the optimal grouping}\  \mathcal{G} = \{G^1, \dots, G^k\}$}
	initialize $possible\_CV\leftarrow\emptyset$\;
	\For{$i=1,\dots,n$}{
		initialize $WTP\leftarrow0$\;
		\For{$j=i,\dots,n$}{
			update $WTP\leftarrow\max\{WTP,\ (j-i+1)\cdot v_j\}$\;
			add $WTP$ to $possible\_CV$;  \tcp{The valuation of $\{i, i+1, \dots,j\}$.}
		}
	}
	initialize $opt\_WTP\leftarrow0,\ opt\_CV\leftarrow0,\ k\leftarrow0$\;
	\For{$CV$ in $possible\_CV$}{
		initialize $CI\leftarrow0$, $subgroup\_size\leftarrow1$\;
		\For{$i=1, \dots, n$}{
			\eIf{$subgroup\_size\cdot v_i \ge CV$}{
				update $CI\leftarrow CI+1$, $subgroup\_size\leftarrow1$\;
			}{
				update $subgroup\_size\leftarrow subgroup\_size+1$\;
			}
		}
		update $WTP_{total}\leftarrow CI\cdot CV$\;
		\If{$WTP_{total} > opt\_WTP$}{
			update $opt\_WTP\leftarrow WTP_{total}$, $opt\_CV\leftarrow CV$, $k\leftarrow CI$\;
		}
	}
	\tcp{Given $opt\_CV$, find the optimal grouping}
	initialize $N\leftarrow[0]$; \tcp{A list with one element 0}
    initialize $subgroup\_size\leftarrow1$\;
	\For{$i=1, \dots, n$}{
		\eIf{$subgroup\_size\cdot v_i \ge opt\_CV$}{
			append $i$ to the end of $N$ and update $subgroup\_size\leftarrow1$\;
		}{
			update $subgroup\_size\leftarrow subgroup\_size+1$\;
		}
	}
	\tcp{Given the list $N$, construct $k$ continuous subgroups, while the remaining members that cannot form an extra subgroup whose valuation reaches $opt\_CV$ is added to the $k$-th subgroup.}
	\For{$j=1, \dots, k$}{
		$\hat{G}^j\leftarrow\{N[j-1]+1, N[j-1]+2, \dots, N[j]\}$\;
	}
	$\hat{G}^{k+1}\leftarrow\{N[k]+1, N[k]+2, \dots, n\}$\;
	$\mathcal{G}\leftarrow\{\hat{G}^1, \dots, \hat{G}^{k-1},\ \hat{G}^{k}\cup \hat{G}^{k+1}\}$\;
\end{algorithm}


\subsection{Analysis of the Optimal Grouping Algorithm}\label{s3.4}
We have the following theorem, proved in Appendix~\ref{app}.

\begin{theorem}\label{thm2}
Given any DAO $G$, Algorithm~\ref{alg1} finds a continuous grouping $\mathcal{G}$ whose valuation is $opt\_WTP$ with time complexity $O(n^3)$.
\end{theorem}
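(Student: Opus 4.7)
My plan is to prove (a) Algorithm~\ref{alg1} returns a continuous grouping $\mathcal{G}$ with $WTP_{total}(\mathcal{G})=opt\_WTP$, and (b) the running time is $O(n^3)$. By Corollary~\ref{cor1} it suffices to compete against continuous groupings, and the correctness argument decomposes into two independent claims: (i) for any candidate value $CV$, the greedy in lines 9--14 produces the largest possible number of continuous subgroups each of valuation at least $CV$; and (ii) the set $possible\_CV$ built in lines 1--7 must contain the critical value $CV^*$ of some optimal continuous grouping.

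For (i) I would use a standard exchange argument. Let greedy produce $T_1,\dots,T_q$, and let $S_1,\dots,S_m$ be any alternative continuous partition of $[n]$ with $WTP(S_j)\ge CV$ for every $j$. The closing test $subgroup\_size\cdot v_i\ge CV$ makes $|T_1|$ the smallest prefix whose interval valuation reaches $CV$, so $|T_1|\le |S_1|$. To set up induction on the suffix $\{|T_1|+1,\dots,n\}$, I form $S_2'=(S_1\setminus T_1)\cup S_2$ and use the following monotonicity: extending $\{a,\dots,b\}$ to $\{a',\dots,b\}$ with $a'<a$ changes the valuation to $\max_{t=a'}^{b}(t-a'+1)v_t$, which dominates the original because each term $(t-a'+1)v_t$ with $t\ge a$ strictly exceeds $(t-a+1)v_t$. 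Hence $WTP(S_2')\ge WTP(S_2)\ge CV$, and induction yields $q\ge m$.

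For (ii), fix any optimal continuous grouping $\mathcal{G}^*=\{G_1,\dots,G_{k^*}\}$ with $opt\_WTP=CI^*\cdot CV^*$, where $CV^*$ is the valuation of some $G_r$. Because every $G_r$ is a contiguous interval of $[n]$, its valuation is enumerated in lines 2--7, so $CV^*\in possible\_CV$. Reordering the subgroups of $\mathcal{G}^*$ by valuation, exactly $CI^*$ of them have valuation at least $CV^*$; call these ``good'' and the rest ``bad.'' I absorb each bad subgroup into an adjacent good one: merging a bad subgroup immediately to the right of a good one $\{a,\dots,b\}$ changes its valuation to $\max_{t=a}^{c}(t-a+1)v_t\ge \max_{t=a}^{b}(t-a+1)v_t$, and merging on its left is covered by the monotonicity above; either way the merged valuation stays $\ge CV^*$. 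This yields a continuous partition of $[n]$ into $CI^*$ subgroups of valuation $\ge CV^*$, so by (i) greedy's output on $CV^*$ satisfies $CI^g\ge CI^*$ and the algorithm records $CI^g\cdot CV^*\ge opt\_WTP$. Since the recorded value lower-bounds the actual $WTP_{total}$ of greedy's grouping, which itself is bounded by $opt\_WTP$, equality is forced; the reconstructed $\mathcal{G}$ in lines 23--31 attains this value because its first $k-1$ intervals close at valuation $\ge opt\_CV$ and $\hat{G}^{k}\cup \hat{G}^{k+1}$ inherits this by the right-extension monotonicity.

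The time bound is routine: lines 1--7 scan $O(n^2)$ intervals with an $O(1)$ update each, so $|possible\_CV|=O(n^2)$; the greedy for every $CV$ costs $O(n)$, totalling $O(n^3)$; and the reconstruction adds $O(n)$. The step I expect to require the most care is the merging in (ii)---one must verify that bad subgroups appearing on either side of, or between, good ones can all be absorbed while keeping the partition continuous and every merged interval's valuation at least $CV^*$, which rests precisely on the endpoint-monotonicity of $(t-a+1)v_t$.
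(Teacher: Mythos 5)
Your proposal is correct and follows essentially the same route as the paper's proof: enumerate the $O(n^2)$ contiguous-interval valuations as candidate critical values, and show by an induction on subgroup endpoints (using the monotonicity of $\max_t (t-a+1)v_t$ under extending an interval) that the greedy packs at least as many qualifying subgroups as any competing continuous grouping. The only cosmetic differences are that you argue directly via an explicit greedy-optimality lemma plus a preprocessing step that absorbs the below-threshold subgroups, whereas the paper argues by contradiction and handles the gaps between qualifying subgroups inside the induction itself.
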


The optimal grouping guarantees that the DAO's $WTP_{total}$ is maximized. Compared to the ungrouped case, it helps the DAO to bid higher and win in more scenarios over other DAOs.
We further show that even if the original ungrouped $WTP_{total}$ is enough to win the item, the optimal grouping guarantees that its allocation is less-excludable compared to the original one.
\begin{definition}[less-excludable allocation]
For a winning DAO $G$ and its two possible allocations $x=(x_1, \dots, x_n)$ and $x'=(x_1', \dots, x_n')$, if $x_i\geq x_i'$ $\forall i\in[n]$,  
then $x$ is {\bf less-excludable} than $x'$.
\end{definition}

Under the current model, $x_i\in\{0,1\}$ $\forall i\in[n]$ and the comparison can be further made on groupings. The ungrouped case is a natural degenerated grouping $\mathcal{G}^*$ with a single subgroup $\{1, 2, \dots, n\}$, and the individually grouped case $\{\{1\}, \{2\}, \dots, \{n\}\}$ is another degenerated grouping which is equivalent to $\mathcal{G}^*$.

\begin{definition}[less-excludable grouping]
For a DAO $G$ and two groupings $\mathcal{G}_1,\ \mathcal{G}_2$, under both of which $G$ is the winning DAO, denote the allocation sets as $A_1,\ A_2$, representing the sets of bidders who have access to the item under the groupings, respectively. If $A_1\supseteq A_2$, then we say that $\mathcal{G}_1$ is \textbf{less-excludable} than $\mathcal{G}_2$.
\end{definition}

The following theorem is also proved in Appendix~\ref{app}.

\begin{theorem}\label{thm3}
For a winning DAO $G$,
the optimal grouping $\mathcal{G}$ by Algorithm~\ref{alg1}
is less-excludable than the degenerated grouping $\mathcal{G}^*$.
\end{theorem}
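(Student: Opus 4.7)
The plan is to prove $A \supseteq A^* = \{1, \ldots, i^*\}$, where $i^* = \max\{i : i \cdot v_i \ge P_{total}\}$ is the access cutoff under $\mathcal{G}^*$. Since $opt\_WTP = WTP_{total}(\mathcal{G}) \ge WTP_{total}(\mathcal{G}^*) \ge P_{total}$, the DAO $G$ also wins under $\mathcal{G}$. Moreover, all $k$ subgroups of $\mathcal{G}$ receive first-level access: each has WTP at least $CV$ by the construction of Algorithm~\ref{alg1}, so $k \cdot WTP^k \ge k \cdot CV = opt\_WTP \ge P_{total}$, forcing $j^* = k$ and an inner price of $P_{total}/k$ for every subgroup.

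For each non-last subgroup $\hat{G}^j$ with $j \le k-1$, the algorithm's closure condition $(N[j] - N[j-1]) \cdot v_{N[j]} \ge CV \ge P_{total}/k$ witnesses that the last member of $\hat{G}^j$ meets the inner threshold; since the inner access rule allocates a prefix, all of $\hat{G}^j$ obtains access. Applying the same closure at index $N[k]$ inside the merged last subgroup $\hat{G}^k \cup \hat{G}^{k+1} = \{N[k-1]+1, \ldots, n\}$ grants inner access to every member with index at most $N[k]$, giving $\{1, \ldots, N[k]\} \subseteq A$.

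The remaining case is $i^* > N[k]$, where some members of $(N[k], i^*]$ sit in the tail $\hat{G}^{k+1}$. For any such $i$, inner access in the merged subgroup follows from exhibiting some $t \ge i - N[k-1]$ with $t \cdot v_{N[k-1]+t} \ge P_{total}/k$. Choosing $t = i^* - N[k-1]$ and using $v_{i^*} \ge P_{total}/i^*$, the needed inequality $(i^* - N[k-1]) \cdot v_{i^*} \ge P_{total}/k$ reduces to the single algebraic condition $N[k-1] \le (k-1) i^*/k$. Once this holds, the same witness simultaneously serves every $i \le i^*$ in the tail (because $i - N[k-1] \le i^* - N[k-1] = t$), and combined with the previous step yields $A \supseteq \{1, \ldots, i^*\}$.

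The main obstacle is establishing $N[k-1] \le (k-1) i^*/k$ in the tail regime $i^* > N[k]$. The plan is to argue by contradiction using the optimality in Theorem~\ref{thm2}. Supposing $N[k-1] > (k-1) i^*/k$, consider the alternative grouping $\mathcal{G}' = \{\hat{G}^1, \ldots, \hat{G}^{k-1}, \{N[k-1]+1, \ldots, i^*\}, \{i^*+1, \ldots, n\}\}$. Its middle piece still contains index $N[k]$ and so inherits the closure inequality $(N[k]-N[k-1])\cdot v_{N[k]} \ge CV$, giving it WTP at least $CV$; the algorithm's tail property $s \cdot v_{N[k]+s} < CV$ for all $s \ge 1$ constrains the final piece. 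A careful accounting of $\max_j j \cdot WTP^j$ for $\mathcal{G}'$, combined with the violated inequality, should exhibit an index achieving a value strictly greater than $k \cdot CV = opt\_WTP$, contradicting optimality and completing the proof.
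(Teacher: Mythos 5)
Your reduction is the same as the paper's: show that all $k$ subgroups get first-level access at price $P_{total}/k$, that every non-last subgroup (and the prefix of the merged last subgroup up to $N[k]$) is fully served by its closure condition, and that the only remaining work in the tail regime $i^*>N[k]$ is the single inequality $N[k-1]\le \frac{k-1}{k}\,i^*$. All of that is correct and matches Appendix~\ref{app}. The gap is in how you propose to establish that inequality. Your contradiction argument does not close: the alternative grouping $\mathcal{G}'=\{\hat G^1,\dots,\hat G^{k-1},\{N[k-1]+1,\dots,i^*\},\{i^*+1,\dots,n\}\}$ still has exactly $k$ pieces of valuation at least $CV$ (the middle piece contains index $N[k]$, and the final piece is constrained below $CV$ by the tail property), so the accounting only yields $WTP_{total}(\mathcal{G}')\ge k\cdot CV=opt\_WTP$, not a strict improvement; indeed, under your hypothesis $N[k-1]>\frac{k-1}{k}i^*$ the middle piece's guaranteed valuation $(i^*-N[k-1])v_{i^*}$ drops \emph{below} $P_{total}/k$, which points the wrong way. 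More fundamentally, optimality alone cannot give you the size bound: one can pad an early subgroup of an optimal continuous grouping with extra members without changing $WTP_{total}$, so there exist optimal groupings violating it. The inequality is a property of the \emph{greedy} construction, not of optimality.

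The missing ingredient is the paper's Lemma~\ref{lemma1}: because Algorithm~\ref{alg1} closes each subgroup as soon as its valuation reaches $opt\_CV$ and member values are non-increasing, the subgroup sizes satisfy $\hat n^1\le\hat n^2\le\cdots\le\hat n^k$ (if $\hat n^a>\hat n^b$ for $a<b$, then $\hat n^b\cdot v^a_{\hat n^b}\ge \hat n^b\cdot v^b_{\hat n^b}\ge opt\_CV$, so the greedy rule would have closed $\hat G^a$ earlier). From this,
\begin{equation*}
N[k-1]=\sum_{j=1}^{k-1}\hat n^j\le (k-1)\,\hat n^k=(k-1)\bigl(N[k]-N[k-1]\bigr),
\end{equation*}
hence $N[k-1]\le\frac{k-1}{k}N[k]<\frac{k-1}{k}\,i^*$ in the tail regime, which is exactly the condition you need. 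With that lemma inserted in place of your contradiction sketch, the rest of your argument goes through and coincides with the paper's proof.
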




Notice that there is no worst-case gap between grouping and non-grouping in terms of their WTP, because non-grouping is a special case of grouping and in some scenarios $opt\_WTP$ is achieved in such a case. However, it's still interesting to see, as a future study, for which class of instances a provable lower-bound exists on how much grouping improves over non-grouping.

It is easy to see that the two-level DAO with the optimal grouping satisfies budget balance and individual rationality. It also maintains equal treatment among subgroups and within the same subgroup, as well as allowing more access for small bidders.
However, it doesn't satisfy incentive compatibility as a bidder with a high valuation may benefit by underbidding ---by doing so it may end up in a larger subgroup and pay a lower price.
%
%
%
This reflects the common free-rider problem when considering public goods, where individuals have the incentive to directly enjoy the access to the public good without making sufficient contributions to the collective \cite{albanese1985rational, andreoni1988free}.
In the following section, we consider whether improving non-excludability and achieving incentive compatibility can be better aligned in a different model.

\section{Collective Utilities and a New Mechanism for DAOs}\label{s4}



Recall that the motivation of DAOs is to leverage the collective strength of the members to achieve a goal based on a shared vision.
Especially towards public goods, prosocial behaviors of individuals should be considered, as suggested by \cite{li2023altruism} and \cite{wang2012evolution}.
Therefore we introduce a positive externality into every member's utility who has access to the item, the social welfare  $SW=\sum_{i\in[n]} v_i\cdot x_i$ of the DAO, scaled by a factor $\alpha\geq 0$.
However, a player $i$ may not want to pay more than its true valuation $v_i$ no matter how much it cares about others. Thus a budget constraint is needed to reflect the fact that $v_i$ is really player $i$'s ``willingness to pay''.
In addition, we enrich the access space by allowing partial accesses to the good, so that for each member $i$ of a DAO, 
$x_i\in[0,1]$ instead of $\{0,1\}$.
Accordingly, for each member $i\in[n]$,
\begin{equation*}
	u_i = \begin{cases}
		v_i\cdot x_i-p_i+\alpha\cdot SW, & x_i>0 \mbox{ and } p_i\leq v_i\\
        -\infty, & p_i>v_i \\
		-p_i, & x_i=0
	\end{cases}
	.
\end{equation*}

Note that the social welfare $SW$ also includes $v_i\cdot x_i$, so the utility function guarantees that for any $\alpha$, player $i$ cares more about its own value than it cares about others. Moreover, player $i$ doesn't care about whether others get access or not if itself has no access (that is, $x_i=0$).
Other ways of bringing in a collective factor are of course possible and worth examining in future studies.



\subsection{A New Mechanism} \label{s4.2}
Below we introduce a new lower-level mechanism $M_1$, with an example in Fig.~\ref{fig3} illustrating how it works more intuitively. The upper-level mechanism $M_u$ remains to be the second-price mechanism.
%
Given a DAO $G$ and its members' bids $b_1,\dots, b_n$ ordered non-increasingly, the aggregation function of $M_1$ is
\begin{equation*}
	WTP_{total}(b_1, \dots, b_n)=\max_{i\in [n]}\{\sum_{j=1}^{i}{\min\{b_j,\ (1+\alpha)\cdot b_i}\}\},
\end{equation*}
which is computable in $O(n^2)$ time.

If DAO $G$ does not win the auction in $M_u$, then $x_i=0$, $p_i=0$ $\forall i\in[n]$.
If DAO $G$ wins the auction with price $P_{total}$, we have $WTP_{total}\geq P_{total}$. $M_1$ will compute two payments, $P_1$ and $P_2=\frac{P_1}{1+\alpha}$. Let  $k_{1}=\max_{i\in [n]}\{i|b_i\ge P_1\}$ and $k_{2}=\max_{i\in [n]}\{i|b_i\ge P_2\}$, with $k_1\leq k_2$. They denote the numbers of members who gain full access and who have at least some partial access, respectively.
Given a group of parameters $(P_1, P_2, k_1, k_2)$, the access and the cost-sharing functions are as follows: $\forall i \in [n]$,
\begin{equation*}
	x_i = \begin{cases}
		1, & i\le k_1\\
		\frac{b_i}{P_1}, & k_1<i\leq k_2\\
		0, & i>k_2
	\end{cases}\quad \quad  \mbox{and}\quad \quad
	p_i = \begin{cases}
		P_1, & i\le k_1\\
		b_i, & k_1<i\leq k_2\\
		0,  & i>k_2
	\end{cases}
	.
\end{equation*}

Note that a uniform pricing standard is guaranteed even for partial accesses: $\forall i\in[n]$, $x_i=\frac{p_i}{P_1}$.
The restriction on small members' participation after $k_2$ helps avoid exacerbating free-riding behavior, which could negatively impact the overall effectiveness of the mechanism. The necessity of such a restriction has also been mentioned in works like \cite{goeree2005not}.



A group of parameters $(P_1, P_2, k_1, k_2)$ is {\em feasible} if it satisfies budget balance, $\sum_{i\in [n]} p_i = P_{total}$.
Algorithm~\ref{alg3} below is used by $M_1$ to compute the optimal feasible parameters with the least excludability.
The time complexity of the algorithm is $O(n^2)$.
As shown in Appendix~\ref{app2}, it actually achieves the highest utility for every member among all feasible parameters.

\begin{algorithm}[h!]
	\caption{Optimal Parameters}
	\label{alg3}
	\LinesNumbered 
	\small
	\KwIn{$b_1, \dots, b_n, \text{where}\ b_1 \ge \cdots \ge b_n, \text{and} \ P_{total}$}
	\KwOut{$P_1,\ P_2,\ k_{1},\ k_{2}$}
	initialize $k_{2}\leftarrow n+1$, $size\leftarrow0$\;
	\While{$P_{total}>size$}{
		update $k_2\leftarrow k_2-1$, $size\leftarrow0$\;
		\For{$j=1, \dots, k_2$}{
			update $size \leftarrow size+\min(b_j, (1+\alpha)b_{k_2})$\;
		}
	}
	initialize $P_{remain}\leftarrow P_{total}-k_{2}\cdot b_{k_{2}}$\;
	\If{$P_{remain}\leq 0$}{
		update $k_{1}\leftarrow k_{2}$, $P_1\leftarrow \frac{P_{total}}{k_{2}}$\;
	}
	\Else{
		update $k_{1}\leftarrow k_{2}-1$\;
		\While{$P_{remain} > 0$}{
			\If{$P_{remain}\leq k_{1} \cdot (b_{k_{1}}-b_{k_{1}+1})$}{
				update $P_1\leftarrow b_{k_{1}+1}+\frac{P_{remain}}{k_{1}}$\;
				\textbf{break}\;
			}\Else{
				update $P_{remain} \leftarrow P_{remain}-k_{1} \cdot (b_{k_{1}}-b_{k_{1}+1})$, $k_{1}\leftarrow k_{1}-1$\;
			}
		}
	}
	$P_2 \leftarrow \frac{P_1}{1+\alpha}$\;
\end{algorithm}

\begin{figure}
	\centering
	\includegraphics[width=0.5\linewidth]{./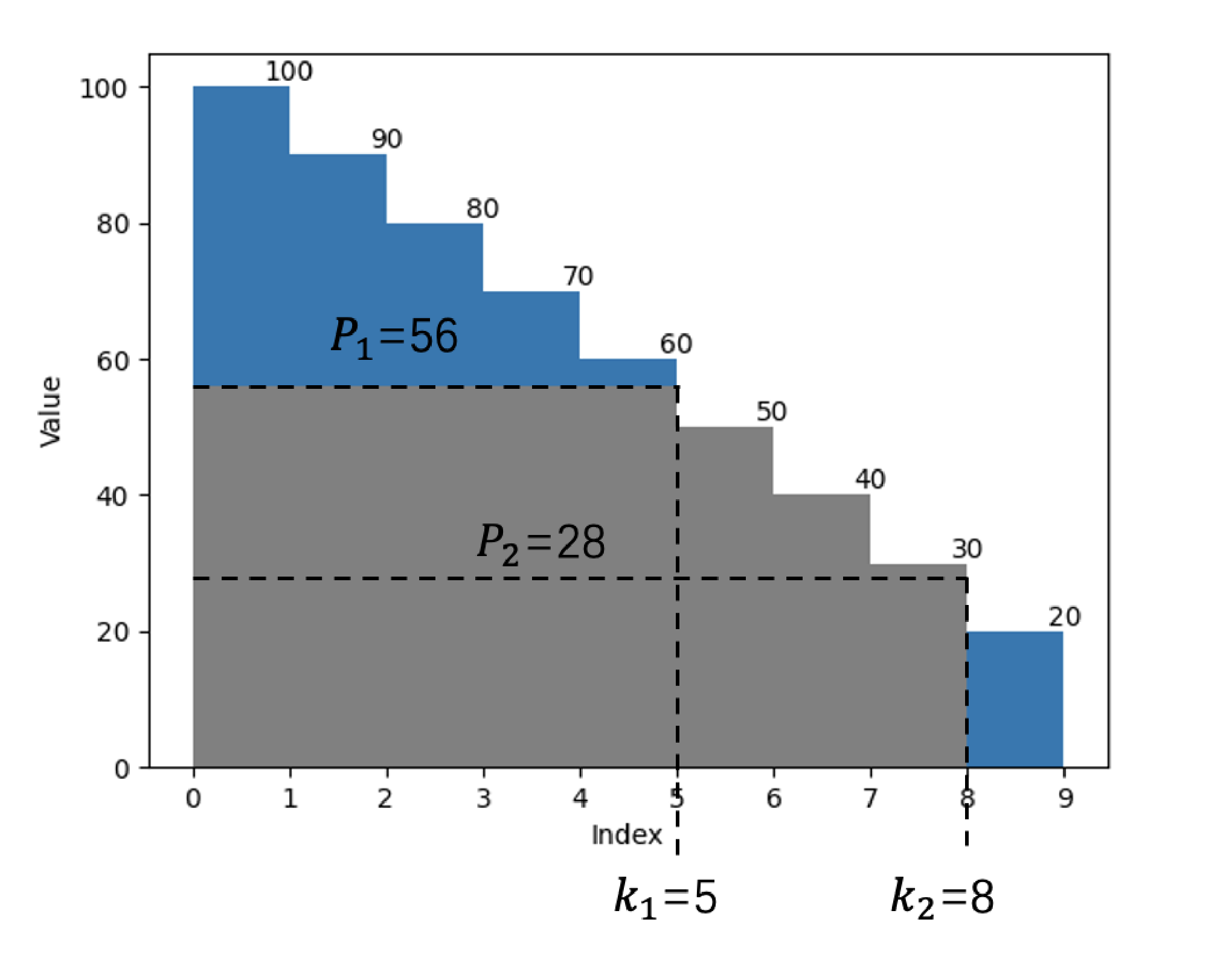}
	\caption{An example for the mechanism $M_1$, with $\alpha=1$ and members in the DAO being the same as Fig.~\ref{fig1}. According to the aggregation function, $WTP_{total}=460$. Assume the DAO wins the auction and $P_{total}=400$. The gray area is used to determine the access and the payment of each member. It can also be viewed as a pool, with the horizontal axis representing the width and the vertical axis representing the height. The capacity of the pool (the size of the gray area) should equal the total payment $P_{total}$ of the DAO and $P_1$ can be seen as the water line.
    The intuition of 
    Algorithm~\ref{alg3} is to find the {\em widest} pool whose capacity reaches $P_{total}$ with the required parameters.}
	\label{fig3}
\end{figure}

\subsection{Analysis of The Mechanism}\label{s4.3}

Mechanism $M_1$ has another advantage:
the value of $\alpha$ in the model can be any nonnegative number without affecting the mechanism's operation.
In fact, $\alpha$ has a practical meaning and can be seen as a measure of the degree of collectivism.
When $\alpha=0$, $M_1$ degenerates to $M_\ell$ and has all its properties such as incentive compatibility. In this case, each member only cares about its individual access and the final allocation only has binary accesses.
As $\alpha$ increases, the proportion of the collective factor in the utility function becomes larger, and the threshold for members to get access becomes lower. This results in
more bids from members being aggregated, leading to a higher overall bid.
For any strictly positive $\alpha$, we have the following main theorem for mechanism $M_1$,
proved in Appendix~\ref{app2}.

\begin{theorem}\label{thm4}
Under the collective-utility model and $\forall \alpha> 0$, mechanism $M_1$ satisfies individual rationality, budget balance, and equal treatment.
It is also incentive compatible in generic games.
Given a winning DAO $G$,
$M_1$ achieves a higher willingness to pay for $G$, a higher social welfare, and a less-excludable allocation compared with $M_\ell$.
\end{theorem}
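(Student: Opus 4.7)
The plan is to establish each assertion of Theorem~\ref{thm4} in turn, grouping the basic structural properties separately from incentive compatibility. I would split the argument into three blocks: (i) individual rationality, budget balance, and equal treatment; (ii) the three comparisons with $M_\ell$; and (iii) incentive compatibility in generic games, which is the main obstacle.

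For budget balance I would appeal to the halting condition of Algorithm~\ref{alg3}: the first while-loop identifies the largest $k_2$ for which the pool capacity $\sum_{j \leq k_2}\min\{b_j,(1+\alpha)b_{k_2}\}$ reaches $P_{total}$, and the subsequent $P_1$-adjustment brings $k_1 \cdot P_1 + \sum_{k_1<i\leq k_2} b_i$ to exactly $P_{total}$. Equal treatment is immediate, since a bidder's tier and both $x_i$ and $p_i$ within a tier depend only on $b_i$ and on the global parameters. For individual rationality under truthful reporting, the key observation is that every $i \leq k_2$ satisfies $v_i \geq P_2 = P_1/(1+\alpha)$, hence $x_i \geq 1/(1+\alpha)$. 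Using $p_i \leq v_i$ (which is built into the utility definition; in particular $P_1 \leq v_i$ for $i \leq k_1$) together with $SW \geq v_i x_i$, a short calculation gives
\begin{equation*}
u_i \;\geq\; v_i x_i - v_i + \alpha\, v_i x_i \;=\; (1+\alpha)\, v_i x_i - v_i \;\geq\; 0.
\end{equation*}

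Next I would compare $M_1$ with $M_\ell$. Because $b_j \geq b_i$ for $j \leq i$ and $\alpha > 0$, the pointwise inequality $\sum_{j \leq i}\min\{b_j,(1+\alpha)b_i\} \geq i \cdot b_i$ shows that the $M_1$ aggregation weakly dominates the $M_\ell$ one, so the winning DAO's willingness to pay can only grow. For less-excludability, I would argue that at the same $P_{total}$ the $M_1$ cutoff satisfies $k_2 \geq i^*$: the $M_\ell$ assignment $i^* \cdot (P_{total}/i^*)$ already fits into the $M_1$ pool at $k_2 = i^*$, and Algorithm~\ref{alg3} pushes $k_2$ further right only until the capacity reaches $P_{total}$. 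A short argument using $P_1 \leq b_{i^*}$ then gives $x_i^{M_1} \geq x_i^{M_\ell}$ for every $i$, delivering both a less-excludable allocation and, since each $v_i \geq 0$, weakly higher social welfare; any additional winning instances induced by the larger $WTP_{total}$ can only further improve social welfare.

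The main obstacle is incentive compatibility in generic games, which I would defer to Appendix~\ref{app2}. Fixing $b_{-i}$ and the true value $v_i$, I would classify player $i$'s deviations into two cases: (a) the DAO's winning status flips; (b) the DAO keeps winning but $(P_1, P_2, k_1, k_2)$ changes. Case~(a) reduces to the familiar second-price argument applied to the aggregated bid $WTP_{total}$, with the individual-rationality bound above guaranteeing that a forced-in win yields at most the utility of the status-quo loss; the genericity assumption rules out exact-tie boundary cases. For case~(b) I would prove a monotonicity lemma: among all feasible $(P_1, P_2, k_1, k_2)$, Algorithm~\ref{alg3} selects the one with the smallest $P_1$ and the largest $k_2$, and this choice simultaneously maximizes every truthful bidder's utility. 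Given the lemma, overbidding can only raise $P_1$, hurting bidders who end up in tier~1, while underbidding can only shrink $i$'s allocation by an amount whose accompanying collective-utility gain is capped by $\alpha\, v_i(1-x_i)$ and is dominated by the direct loss $v_i(x_i-x_i')$ once $x_i \geq 1/(1+\alpha)$. The delicate step is precisely this monotonicity lemma, where genericity is invoked to avoid corner cases in which a small misreport merely reshuffles tied bidders without strictly changing anyone's utility.
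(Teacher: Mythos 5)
Your treatment of individual rationality, budget balance, equal treatment, and the three comparisons with $M_\ell$ follows essentially the same route as the paper and is correct, and you rightly identify the monotonicity lemma (the paper's Lemma~\ref{lemma2}: the widest feasible pool has the smallest $P_1$ and weakly maximizes every member's access and utility) as the key tool. The genuine gaps are all in the incentive-compatibility block. First, your deterrent for overbidding is misidentified: raising $b_i$ weakly \emph{widens} the feasible pool and therefore weakly \emph{lowers} $P_1$, so ``overbidding can only raise $P_1$'' is backwards, and a lower $P_1$ would by itself make overbidding attractive (it increases everyone's access). What actually deters overbidding in $M_1$ is the hard budget constraint in the utility function: an overbidder who lands in the partial tier pays its own bid $b_i'>v_i$ and receives utility $-\infty$, and one who lands in the full tier either pays $P_1'>v_i$ (again $-\infty$) or exhibits a feasible parameter set that would already have given it full access under truthful bidding, contradicting $i>k_1$. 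Without invoking the $-\infty$ penalty the argument does not close. Relatedly, the case where overbidding flips the DAO from losing to winning is not ``the familiar second-price argument'': the member's utility is not the DAO-level quasi-linear surplus, and the paper instead shows that $p_i'\le v_i$ in that case forces $WTP_{total}\ge\sum_{j\le k_2'}\min\{b_j,P_1'\}=WTP_1$, contradicting $WTP_1>WTP_{total}$ (this is where genericity is used, not merely to break ties in a reshuffling).

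Second, the underbidding trade-off is misframed. In the partial tier a member pays its own bid, so the gain from reporting $v_i-\delta$ is the payment reduction $\delta$, not a ``collective-utility gain capped by $\alpha v_i(1-x_i)$''; the collective term can only \emph{decrease} under an underbid, since $P_1$ rises and every $x_j$ weakly falls. The actual proof needs the quantitative relation $P_1'=P_1+\delta/k_1$ (obtained from budget balance in the worst case $k_1'=k_1$, $k_2'=k_2$ singled out by the monotonicity lemma) and then a sign analysis of the resulting quadratic in $\delta$, which is nonpositive on $(0,v_i)$ precisely because $v_i\ge P_1/(1+\alpha)$. Your inequality ``direct loss $v_i(x_i-x_i')$ dominates once $x_i\ge 1/(1+\alpha)$'' gestures at the right threshold but does not follow from the quantities you have set up. I would redo parts (a) and (b) of your IC argument with the budget-constraint penalty and the explicit relation between $P_1'$ and $\delta$ in hand.
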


\section{Future Directions}\label{s6}



Our study tackles non-excludability of public goods for DAOs from an algorithmic perspective and a mechanism design perspective. From both angles we show that the collective behavior of a DAO's members has a rich structure and could improve the social efficiency and the non-excludability of the DAO when treated properly.
Many open problems regarding to non-excludability are also interesting, besides the ones already mentioned in the paper.
For instance, our mechanism can be extended to repeated games by introducing a time factor and relaxing the budget constraint to be time-dependent, transforming the setting into a multi-stage game with observations. Through myopic best-response analysis, we find that the system frequently enters cycles and exhibits stable states.
Also, analyzing the mechanism via solution concepts other than incentive compatibility (e.g., regrets), and considering different auction settings such as multi-item Vickrey auctions for public goods are all intriguing avenues for future research.

\begin{credits}
\subsubsection{\ackname} We thank Jichen Li and Zihao Zhen for helpful discussions in the early stage of this work and several anonymous reviewers for valuable comments. This work is partially supported by Beijing Advanced Innovation Center for Future Blockchain and Privacy Computing.
\end{credits}


%
%
%
\bibliographystyle{splncs04}
\bibliography{mybibliography}

\appendix
\section{Missing Proofs for Section~\ref{s3}}\label{app}

\noindent
{\bf Theorem~\ref{thm1}.} (Restated)
{\em For any DAO $G$ and any grouping $\mathcal{G}=\{G^1, \dots, G^k\}$ of $G$, there exists a continuous grouping $\mathcal{\overline{G}}=\{\overline{G}^1, \dots, \overline{G}^k\}$
	such that $WTP_{total}(\mathcal{\overline{G}}) \geq WTP_{total}(\mathcal{G})$.}

\begin{proof}
	We show the existence of $\mathcal{\overline{G}}$ by construction. Recall that the members of $G$ are ordered by their valuations non-increasingly, $v_1\geq v_2\geq \cdots\geq v_n$, and that for each subgroup $G^j$ in $\mathcal{G}$, the members' valuations are also ordered non-increasingly,
	$v_1^j\geq \cdots \geq v_{n^j}^j$.
	We denote by  $CI^j = \arg\max_{i\in [n^j]}\{i\cdot v_i^j\}$ the critical index of $G^j$, breaking ties in favor of the largest $i$. The corresponding valuation is referred to as the critical value of $G^j$ and denoted by $CV^j$. Thus $WTP^j=CI^j\cdot CV^j$.
	
	We reorder the subgroups in $\mathcal{G}$ by their critical values from high to low, so that $CV^1 \ge CV^2 \ge \cdots \ge CV^k$.
	Since $1\leq CI^j\leq n^j$ for each $j\in [k]$,
	$$k\leq \sum_{j\in [k]} CI^j \leq \sum_{j\in [k]} n^j = n.$$
	Let the highest $CI^1$ members of the entire DAO $G$ form subgroup $\overline{G}^1$, the next $CI^2$ members of $G$ form subgroup $\overline{G}^2$, and so on until the subgroup $\overline{G}^{k-1}$. At last, put all the remaining members of $G$ into the final subgroup $\overline{G}^k$.
	That is, letting $N^j = \sum_{\ell=1}^j CI^\ell$ for each $j\in [k]$ and $N^0 = 0$, we have
	$$\overline{G}^j = \{N^{j-1}+1, N^{j-1}+2, \dots, N^j\} \ \forall j\in \{1, \dots, k-1\},$$
	and $\overline{G}^k = \{N^{k-1}+1, N^{k-1}+2, \dots, N^k, N^k+1, \dots, n\}$.
	
	It follows immediately that $\mathcal{\overline{G}}=\{\overline{G}^1, \dots, \overline{G}^k\}$ is a grouping of $G$ and $\mathcal{G}$ is continuous.
	The new grouping $\mathcal{\overline{G}}$ is still composed of k subgroups,
	$$
	\overline{n}^j=CI^j\ \forall j\in \{1, \dots, k-1\}\ \text{and}\ \overline{n}^k\ge CI^k.
	$$
	Next we prove that $\forall j \in [k]$, the valuation of subgroup $G^j$, $WTP^j$, is less than or equal to that of the newly formed subgroup $\overline{G}^j$, $\overline{WTP}^j$.
	(It should be noted that the $\overline{G}^j$'s here have not been reordered by their valuations, and only represent the subgroups defined in the above construction.)
	
	Case 1: $j < k$. By the definition of critical value, $\forall \ell\in [k]$, in the subgroup $G^\ell$ there are $CI^\ell$ members whose valuations are greater than or equal to $CV^\ell$.
	Since $CV^1 \ge CV^2 \ge \cdots \ge CV^j$ by our reordering of the subgroups in $\mathcal{G}$, in the entire DAO $G$, there are at least $\sum_{\ell=1}^{j}CI^\ell = N^j$ members whose valuations are greater than or equal to $CV^j$.
	That is,
	$$CV^j\leq v_{N^j}.$$
	Recall that $\overline{G}^j = \{N^{j-1}+1, N^{j-1}+2, \dots, N^j\}$ by construction.
	Let $\overline{v}_{\overline{n}^j}^j$ be the last member's valuation in $\overline{G}^j$, we thus have $\overline{v}_{\overline{n}^j}^j = v_{N^j}$ and
	$CV^j \leq \overline{v}_{\overline{n}^j}^j$.
	So for each $j<k$,
	$$\overline{WTP}^j \ge \overline{n}^j\cdot \overline{v}_{\overline{n}^j}^j = CI^j\cdot \overline{v}_{\overline{n}^j}^j \ge CI^j\cdot CV^j = WTP^j.$$
	
	Case 2: $j=k$. Similar to the previous case, there are at least $\sum_{\ell=1}^{k}CI^\ell = N^k$ members whose valuations are greater than or equal to $CV^k$, thus $CV^k\leq v_{N^k}$.
	Let $\overline{v}_{CI^k}^k$ be the $CI^k$-th highest valuation in $\overline{G}^k$, by the construction of $\overline{G}^k$
	we have $\overline{v}_{CI^k}^k = v_{N^k}$.
	Thus
	$CV^k \le
	\overline{v}_{CI^k}^k$
	and
	$$\overline{WTP}^k \ge CI^k\cdot \overline{v}_{CI^k}^k \ge CI^k\cdot CV^k = WTP^k.$$
	
	In sum, for each $j\in [k]$, $\overline{WTP}^j$ is greater than or equal to the corresponding $WTP^j$. Note that the $\overline{WTP}^j$'s and the $WTP^j$'s are not ordered non-increasingly.
	However, for each $j\in [k]$, the $j$-th highest value in $\{\overline{WTP}^1, \dots, \overline{WTP}^k\}$
	is greater than or equal to the $j$-th highest value in
	$\{WTP^1, \dots, WTP^k\}$.
	Thus, after reordering both the $\overline{WTP}^j$'s and the $WTP^j$'s non-increasingly,
	it still holds that
	$\overline{WTP}^j\geq WTP^j$ $\forall j\in [k]$.
	By the definition of $WTP_{total}$, we have $WTP_{total}(\overline{\mathcal{G}}) \ge WTP_{total}(\mathcal{G})$ as desired.
	$\hfill\square$
\end{proof}


\bigskip

\noindent
{\bf Theorem~\ref{thm2}.} (Restated)
{\em Given any DAO $G$, Algorithm~\ref{alg1} finds a continuous grouping $\mathcal{G}$ whose valuation is $opt\_WTP$ with time complexity $O(n^3)$.}

\begin{proof}
	By Corollary~\ref{cor1}, it suffices to maximize the DAO's willingness to pay among all continuous groupings.
	Therefore, the critical value $CV$ of the optimal grouping must appear as the valuation of a subgroup with members in a continuous interval. There are $O(n^2)$ possible continuous intervals for $[n]$, of the form $\{i, i+1, \dots, j\}$, with $i\in [n]$ and $i\leq j\leq n$.
	So there are $O(n^2)$ candidate $CV$s for the optimal grouping, enumerated in lines 2-6.
	
	Next, for each possible $CV$, one more loop of time $O(n)$ is needed to find its corresponding $CI$ and $WTP_{total}$, so the complexity is $O(n^3)$ to compute $opt\_WTP$ and the corresponding critical value $opt\_CV$ in lines 7-17.
	Another $O(n)$ time is needed to compute the optimal grouping given $opt\_CV$ in lines 18-28, thus the algorithm's total complexity is $O(n^3)$.
	
	It should be noted that the subgroups of $\mathcal{G}$ constructed in Algorithm~\ref{alg1} are not ordered by their $WTP^j$'s, but by their members' valuations non-increasingly.
	Since $WTP_{total}$ can be calculated directly through $CI\cdot CV$, in the proof below we will continue using this ordering of the subgroups.

	Assume, for the sake of contradiction, that Algorithm~\ref{alg1} does not find an optimal grouping, so there exists a continuous grouping $\overline{\mathcal{G}}$ with
	$$WTP_{total}(\overline{\mathcal{G}}) > WTP_{total}(\mathcal{G}).$$
	Denote $\overline{\mathcal{G}}$'s critical index by $\overline{CI}$ and critical value by
	$\overline{CV}$, then
	$$WTP_{total}(\overline{\mathcal{G}})=\overline{CI}\cdot \overline{CV}.$$
	Below we only consider the $\overline{CI}$ subgroups of $\overline{\mathcal{G}}$ whose valuations are at least $\overline{CV}$, as other subgroups do not contribute to $WTP_{total}(\overline{\mathcal{G}})$.
	Denote them by $\overline{G}^1, \dots, \overline{G}^{\overline{CI}}$ and their sizes by $\overline{n}^1,\dots,\overline{n}^{\overline{CI}}$.
	Notice that they are disjoint continuous intervals in $[n]$, but there may be gaps in between. Also that at least one of them has its valuation being exactly $\overline{CV}$.
	We reorder them
	by their members' valuations non-increasingly (or, by their members' indices non-decreasingly), so the lowest valuation in the $j^{th}$ subgroup $\overline{G}^j$
	is at most $v_{\sum_{i=1}^{j}{\overline{n}^i}}$, since  its index in $[n]$ is at least $\sum_{i=1}^{j}{\overline{n}^i}$.
	
	Because $\overline{CV}$ is the valuation of a continuous interval in $[n]$, the corresponding interval must have been enumerated in lines 2-6 of the algorithm and $\overline{CV}$ must have been added to the set $possible\_CV$.
	Wlog, we can assume that $\mathcal{G}$ and $k$
	output by Algorithm~\ref{alg1} are from when $CV$ is set to $\overline{CV}$ in lines 8-17, since otherwise the $WTP_{total}$ of the final output grouping can only get larger.
	Next, we'll find a contradiction following the assumption $WTP_{total}(\overline{\mathcal{G}}) > WTP_{total}(\mathcal{G})$.
	
	By construction, $WTP_{total}(\mathcal{G}) = k \cdot \overline{CV}$. Since
	$WTP_{total}(\overline{\mathcal{G}}) =\overline{CI}\cdot \overline{CV} > WTP_{total}(\mathcal{G})$,
	we have
	$$\overline{CI} \geq k+1.$$
	For each $j\in [k+1]$, let
	$\hat{n}^j$ be the size of $\hat{G}^j$ in the algorithm.
	According to the greedy way of constructing the subgroups in the algorithm, $G^1= \hat{G}^1$ is the smallest continuous interval starting from member $1$ with valuation at least $\overline{CV}$.
	Since $\overline{G}^1$ starts from a member index that is at least $1$, each of the member values in $\overline{G}^1$ is less than or equal to the one with the same ranking in $G^1$, and $\overline{G}^1$ cannot have less members to achieve valuation $\overline{CV}$. That is,
	$n^1= \hat{n}^1\leq \overline{n}^1$.
	
	Inductively, assume the sizes of the first $j<k$ subgroups of $\mathcal{G}$ are all less than or equal to those of the corresponding ones of $\overline{\mathcal{G}}$;
	that is,
	$n^i = \hat{n}^i \leq \overline{n}^i \ \forall 1\leq i \leq j$.
	Again by the greedy construction of the algorithm, $\hat{G}^{j+1}$ is the smallest continuous interval starting from member $\sum_{i=1}^{j} \hat{n}^i +1$ with valuation at least $\overline{CV}$. By the inductive hypothesis, $\overline{G}^{j+1}$ starts from a member index that is at least $\sum_{i=1}^j \overline{n}^i +1 \geq \sum_{i=1}^{j} \hat{n}^i +1$.
	So each of the member values in $\overline{G}^{j+1}$ is less than or equal to the one with the same ranking in $\hat{G}^{j+1}$, and
	$\hat{n}^{j+1}\leq \overline{n}^{j+1}$ as before.
	Thus by induction we have
	$\hat{n}^j \leq \overline{n}^j \ \forall j\in [k]$
	and
	$$\sum_{j=1}^k \overline{n}^j \geq \sum_{j=1}^k \hat{n}^j.$$
	
	By construction, $\hat{G}^{k+1}$ contains all members from $\sum_{j=1}^k \hat{n}^j +1$ to $n$. On the other hand, $\overline{G}^{k+1}$ starts from a member index that is at least $\sum_{j=1}^k \overline{n}^j +1 \geq \sum_{j=1}^k \hat{n}^j +1$, and ends at a member index that is at most $n$.
	Accordingly,
	$$\overline{G}^{k+1}\subseteq \hat{G}^{k+1}.$$
	
	
	Again by the construction of the algorithm, $\hat{G}^{k+1}$'s valuation is strictly less than $\overline{CV}$, as otherwise the corresponding critical index would have been $k+1$ instead of $k$.
	Thus so does $\overline{G}^{k+1}$ ---that is, they only consist of the ``useless part'' that cannot form a subgroup to reach the desired critical value.
	This is a contradiction to the condition $\overline{CI} \geq k + 1$ which follows from the assumption $WTP_{total}(\overline{\mathcal{G}}) > WTP_{total}(\mathcal{G})$. Therefore the assumption is false and Theorem \ref{thm2} holds.
	$\hfill\square$
\end{proof}

\bigskip

\noindent
{\bf Theorem~\ref{thm3}.} (Restated)
{\em For a winning DAO $G$,
	the optimal grouping $\mathcal{G}$ by Algorithm~\ref{alg1}
	is less-excludable than the degenerated grouping $\mathcal{G}^*$.
}

\bigskip
To prove Theorem \ref{thm3} we first have the following lemma.

\begin{lemma}\label{lemma1}
	For a winning DAO $G$ and the optimal grouping $\mathcal{G}=\{\hat{G}^1, \dots, \hat{G}^k\cup \hat{G}^{k+1}\}$ by Algorithm~\ref{alg1}, we have $\hat{n}^1 \le \hat{n}^2 \le \cdots \le \hat{n}^k$, where $\hat{n}^j=|\hat{G}^j|$ $\forall j\in [k]$.
\end{lemma}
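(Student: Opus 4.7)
The plan is to prove the claim by contradiction, leveraging the greedy structure of Algorithm~\ref{alg1}. First I would make explicit the defining property of each subgroup formed in lines 22--28: if $\hat{G}^j$ starts at member $s_j$ and has size $\hat{n}^j$, then $\hat{n}^j$ is the \emph{smallest} positive integer $t$ satisfying $t\cdot v_{s_j+t-1}\ge opt\_CV$. Equivalently, no valid subgroup of smaller size can be carved out starting at $s_j$; this is exactly what the \texttt{else} branch (which increments $subgroup\_size$) guarantees before the first time the \texttt{if} branch triggers.

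Next I would assume, for contradiction, that $\hat{n}^{j+1}<\hat{n}^j$ for some $1\le j<k$, and compare the two subgroups pointwise at position $t=\hat{n}^{j+1}$. Since $\hat{G}^{j+1}$ starts strictly after $\hat{G}^j$ (at $s_{j+1}=s_j+\hat{n}^j$) and the members of $G$ are indexed in non-increasing order of valuation, the $t$-th member of $\hat{G}^j$ has a valuation at least as large as the $t$-th member of $\hat{G}^{j+1}$; concretely, $v_{s_j+\hat{n}^{j+1}-1}\ge v_{s_{j+1}+\hat{n}^{j+1}-1}$. Because $\hat{G}^{j+1}$ closed at size $\hat{n}^{j+1}$, the right-hand side multiplied by $\hat{n}^{j+1}$ is at least $opt\_CV$; hence so is the left, which would force Algorithm~\ref{alg1} to close $\hat{G}^j$ at size $\hat{n}^{j+1}$, contradicting $\hat{n}^j>\hat{n}^{j+1}$.

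The main obstacle is essentially bookkeeping: aligning the loop variable $i$ of the algorithm with $subgroup\_size$ so that the ``smallest valid size'' characterization of $\hat{n}^j$ is rigorously justified, and restricting the argument to the first $k$ subgroups only. The residual subgroup $\hat{G}^{k+1}$ is intentionally excluded from the monotonicity claim because it is never closed by the greedy rule ---its valuation is strictly less than $opt\_CV$--- and so it is not produced by the same first-closing condition that drives the monotonicity of the earlier sizes.
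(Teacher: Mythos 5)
Your proof is correct and follows essentially the same approach as the paper's: a contradiction argument using the fact that members are indexed in non-increasing order of valuation, so the $t$-th member of an earlier subgroup has valuation at least that of the $t$-th member of a later one, and the greedy first-closing rule would then have closed the earlier subgroup at the smaller size. The only cosmetic difference is that you compare adjacent subgroups (and chain by transitivity) while the paper compares an arbitrary pair $a<b$ directly; both yield the same conclusion.
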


\begin{proof}
	Assume that for the sake of contradiction, $\exists a,b \in [k],\ a<b \ \mbox{and} \ \hat{n}^a > \hat{n}^b$.
	Since $\hat{n}^b\cdot v^b_{\hat{n}^b} \geq opt\_CV$ by construction, and since the members are ordered by their valuations non-increasingly, we have $v^a_{\hat{n}^b}\geq v^b_{\hat{n}^b}$ and
	then $$\hat{n}^b\cdot v_{\hat{n}^b}^a \ge \hat{n}^b\cdot v_{\hat{n}^b}^b \ge opt\_CV.$$
	Therefore, when constructing $\hat{G}^a$, by its greedy nature Algorithm~\ref{alg1} stops as soon as the current subgroup's valuation is at least $opt\_CV$, which means it should stop at having $\hat{n}^b$ members instead of $\hat{n}^a$ members, a contradiction. 
	$\hfill\square$
\end{proof}

\begin{proof}[of Theorem \ref{thm3}]
	As long as $G$ is the winning DAO,
	the price $P_{total}$ is the second highest bid in the auction and is unrelated to the substructure of $G$.
	Thus $P_{total}$ doesn't change whether the underlying grouping is $\mathcal{G}^*$ or $\mathcal{G}$. Also, $$P_{total}\leq WTP_{total}(\mathcal{G}^*) \leq WTP_{total}(\mathcal{G}) = opt\_WTP.$$
	
	
	
	According to $\mathcal{G}^*$, the first $i^*$ bidders in $G$ get access to the item, where $i^*=\max_{i\in[n]}\{i | i \cdot v_i\ge P_{total}\}$. In particular,
	$$i^* \cdot v_{i^*}\geq P_{total}.$$
	Next we prove that these $i^*$ bidders also get access according to $\mathcal{G}$.
	According to Algorithm~\ref{alg1},
	$$opt\_WTP=k\cdot opt\_CV \ge P_{total} \text{ and } WTP^j \ge opt\_CV \ \forall j \in [k].$$
	Thus $k\cdot \min_{j\in [k]} WTP^j \geq k\cdot opt\_CV \geq P_{total}$,
	all $k$ subgroups have access to the item, and each subgroup pays $p= \frac{P_{total}}{k}$.
	We distinguish two cases.
	
	Case 1: member $i^*$ of $G$ is not in the final subgroup $G^k$ in Algorithm~\ref{alg1}. Since $\forall j \in \{1, \dots, k-1\}$, $G^j=\hat{G}^j$, by the construction of Algorithm~\ref{alg1} we have
	$WTP^j=\hat{n}^j\cdot v_{\hat{n}^j}^j \ge opt\_CV \geq \frac{P_{total}}{k}$.
	Thus all members in these subgroups get access, which implies that the first $i^*$ members in $G$ get access.
	
	Case 2: member $i^*$ of $G$ is in the final subgroup $G^k=\hat{G}^{k}\cup \hat{G}^{k+1}$ in Algorithm~\ref{alg1}.
	We denote its rank in the subgroup as $i_1$, so $v_{i^*}=v_{i_1}^k$.
	Since $$\hat{n}^k\cdot v_{\hat{n}^k}^k \ge opt\_CV \ge \frac{P_{total}}{k},$$ the first $\hat{n}^k$ bidders in the subgroup $G^k$ also have access to the item. So in the following proof, we only need to consider the case where $i_1 > \hat{n}^k$.
	
	According to Lemma~\ref{lemma1}, we have
	$$\hat{n}^1 \le \cdots \leq \hat{n}^{k-1} \le \hat{n}^k \leq  i_1.$$
	Moreover, $\forall j \in [k-1],\ n^j = \hat{n}^j,$
	where $n^j = |G^j|$.
	So, $i_1 \ge n^{k-1} \ge \frac{\sum_{j=1}^{k-1}n^j}{k-1}$,
	which means
	$$i^*=\sum_{j=1}^{k-1}n^j+i_1 \ge \sum_{j=1}^{k-1}n^j+\frac{\sum_{j=1}^{k-1}n^j}{k-1}=\frac{\sum_{j=1}^{k-1}n^j}{k-1}\cdot k.$$
	So $$\frac{P_{total}}{k}\cdot(k-1)\ge \frac{P_{total}}{i^*}\cdot \sum_{j=1}^{k-1}n^j.$$
	It represents that for the bidders in the first $k-1$ subgroups, their total payment in the optimal grouping $\mathcal{G}$
	is at least their total payment in the original degenerated grouping $\mathcal{G}^*$.
	
	So in the final subgroup $G^k$, its payment is
	$$\frac{P_{total}}{k}=P_{total}-\frac{P_{total}}{k}\cdot(k-1)\le P_{total}-\frac{P_{total}}{i^*}\cdot\sum_{j=1}^{k-1}n^j = \frac{P_{total}}{i^*}\cdot i_1 \le v_{i^*}\cdot i_1=i_1\cdot v_{i_1}^k.$$
	Thus in $G^k$, at least the first $i_1$ bidders get access to the item.
	
	Combining the two cases together, all $i^*$ bidders that get access according to $\mathcal{G}^*$ still
	get access according to $\mathcal{G}$, thus $\mathcal{G}$ is less-excludable.
	$\hfill\square$
\end{proof}

\section{Missing Proofs for Section~\ref{s4}}\label{app2}

In Algorithm~\ref{alg3}, the output is a feasible group of parameters with the widest pool. It goes through different pool widths that start with $k_2=n$ and continue to decrease. Based on the condition that $P_1=(1+\alpha)P_2\leq (1+\alpha)b_{k_2}$, it calculates the maximum capacity of each pool until it reaches $P_{total}$. As lines 1-5 show, this part is a double loop of time $O(n^2)$ and stops no later than when $size$ reaches $WTP_{total}$.
After determining the pool width $k_2$, the height of the ``water line'' $P_1$ is determined by fulfilling the pool with the amount of water $P_{total}$ layer by layer. As lines 6-16 show, it consists of a loop of time $O(n)$. In conclusion, the time complexity of Algorithm~\ref{alg3} is $O(n^2)$.

\medskip
\noindent
{\bf Theorem~\ref{thm4}.} (Restated)
{\em Under the collective-utility model and $\forall \alpha> 0$, mechanism $M_1$ satisfies individual rationality, budget balance, and equal treatment.
	It is also incentive compatible in generic games.
	Given a winning DAO $G$,
	$M_1$ achieves a higher willingness to pay for $G$, a higher social welfare, and a less-excludable allocation compared with $M_\ell$.}

\bigskip
To prove Theorem \ref{thm4} we first have the following lemma, which shows that the feasible parameters with the largest $k_2$ is the best to look for.

\begin{lemma}\label{lemma2}
	Given $P_{total}$ and $b_1, \dots, b_n$, and given two groups of feasible parameters $(P_1, P_2, k_1, k_2)$ and $(P_1', P_2', k_1', k_2')$, we denote the allocation vectors by $x$ and $x'$, and the utility vectors
	by $u$ and $u'$, respectively.
	If $k_2'<k_2$, then $P_1'>P_1$, $P_2'>P_2$, $k_{1}'<k_{1}$, $x$ is less-excludable than $x'$ and $\forall i\in[n]$, $u_i\geq u_i'$.
\end{lemma}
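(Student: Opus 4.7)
The proof rests on the pool interpretation of the mechanism. Budget balance can be written as
\[
P_{total} \;=\; \sum_{i=1}^{k_2}\min(b_i,P_1) \;=\; \sum_{i=1}^{k_2'}\min(b_i,P_1'),
\]
because a fully-served member $i\le k_1$ contributes $P_1$ while a partially-served member $i\in(k_1,k_2]$ contributes $b_i=\min(b_i,P_1)$. Since $k_2'<k_2$ and $b_i\ge P_2>0$ for every $i\le k_2$, restricting the sum to $i\le k_2'$ loses a strictly positive mass, so the water line must be strictly higher on the primed side: $P_1'>P_1$, from which $P_2'=P_1'/(1{+}\alpha)>P_2$ is immediate. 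To pin down $k_1$, I argue by contradiction: if $k_1'>k_1$, bid-monotonicity gives $P_1'\le b_{k_1'}\le b_{k_1+1}<P_1$, contradicting $P_1'>P_1$, so $k_1'\le k_1$ (strict inequality holds away from boundary ties).

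For the less-excludable comparison I plan a case split on $i$ into the five intervals $[1,k_1']$, $(k_1',k_1]$, $(k_1,k_2']$, $(k_2',k_2]$, $(k_2,n]$. Only the three middle intervals are nontrivial: on $(k_1',k_1]$ we have $x_i=1\ge b_i/P_1'=x_i'$; on $(k_1,k_2']$ both allocations are partial and $x_i=b_i/P_1>b_i/P_1'=x_i'$ since $P_1<P_1'$; on $(k_2',k_2]$ we have $x_i>0=x_i'$. Pointwise $x_i\ge x_i'$ together with $v_i\ge 0$ then yields $SW\ge SW'$, which the last step will rely on.

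For the utility comparison $u_i\ge u_i'$ (assuming truthful bidding so that $v_i=b_i$ and the $p_i\le v_i$ clause is automatic), the same five-way split works. On $[1,k_1']$, $(k_1,k_2']$ and $(k_2,n]$, the own-term change is nonnegative and $\alpha(SW-SW')\ge 0$ closes the bound. On $(k_1',k_1]$, rewriting the own-term as $2v_i-P_1-v_i^2/P_1'$ and using $v_i\ge P_1$ and $v_i<P_1'$ shows it dominates $v_i-P_1\ge 0$. The genuine obstacle is the interval $(k_2',k_2]$, where $u_i'=0$ but $u_i=b_i^2/P_1-b_i+\alpha SW$ has a negative own-term; here one must show that the collective gain pays for the partial-access loss. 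I plan to bound $b_i-b_i^2/P_1=(b_i/P_1)(P_1-b_i)\le P_1-b_i\le P_1-P_2=\alpha P_2$ and then combine with $SW\ge v_1\ge b_1\ge P_1\ge P_2$ to obtain $\alpha SW\ge \alpha P_2\ge b_i(1-b_i/P_1)$, whence $u_i\ge 0=u_i'$. This last bookkeeping is the only place that really needs $\alpha>0$, and I expect it to be the main technical hurdle of the proof.
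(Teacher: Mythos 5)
Your proof follows the same route as the paper's: budget balance forces the water line up when the pool narrows, and everything else is a pointwise comparison. Your derivation of $P_1'>P_1$ (monotonicity of $\sum_{i\le k_2'}\min(b_i,\cdot)$ plus the strictly positive mass lost from the interval $(k_2',k_2]$) is arguably cleaner than the paper's argument by contradiction, and your caveat that $k_1'<k_1$ may weaken to $k_1'\le k_1$ at boundary ties is a legitimate observation the paper glosses over. The allocation comparison and the first four utility intervals are correct.

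The one fragile link is the interval $(k_2',k_2]$, which you rightly identify as the real work (the paper just asserts $u_i'=0\le u_i$ there, implicitly deferring to its individual-rationality computation). Your chain $SW\ge v_1\ge b_1\ge P_1\ge P_2$ silently assumes $k_1\ge 1$, i.e.\ $b_1\ge P_1$, which is not guaranteed by feasibility alone: feasibility only requires budget balance, and one can have $P_1>b_1$. For instance with $n=1$, $b_1=v_1=P_{total}=1$, $\alpha=3$, $P_1=3$ the parameters are feasible with $k_1=0$, $k_2=1$, and $SW=v_1x_1=1/3<3/4=P_2$, so the inequality $SW\ge P_2$ you rely on is false there (even though your target $u_i\ge 0$ still holds, since $\alpha SW=1\ge b_1-b_1^2/P_1=2/3$). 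The robust one-line replacement is the paper's IR bound, which needs no reference to member $1$ at all: $u_i\ge(1+\alpha)v_ix_i-p_i=\frac{v_i}{P_1}\bigl((1+\alpha)v_i-P_1\bigr)\ge 0$, using only $v_i=b_i\ge P_2=P_1/(1+\alpha)$ for $i\le k_2$. Finally, your explicit restriction to truthful bids for the utility claim is the right call: for an overbidder in $(k_2',k_2]$ one has $p_i=b_i>v_i$ and hence $u_i=-\infty$, so the utility part of the lemma genuinely needs that hypothesis, which the paper leaves implicit.
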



\begin{proof}
	Given $k_2'<k_2$, for the sake of contradiction, assume that $P_1'\leq P_1$, then according to the definition of $k_1$ and $k_2$, we have $k_{2}>k_{2}'\geq k_{1}'\geq k_{1}$. Since the parameters are feasible, budget balance should be satisfied, we have $P_{total}=k_{1}\cdot P_1+\sum_{i=k_{1}+1}^{k_{2}}b_i>k_{1}\cdot P_1+\sum_{i=k_{1}+1}^{k_{1}'}b_i+\sum_{i=k_{1}'+1}^{k_{2}'}b_i\geq k_{1}'\cdot P_1'+\sum_{i=k_{1}'+1}^{k_{2}'}b_i=P_{total}$, which is a contradiction. Thus $P_1'> P_1$, $k_{1}'< k_{1}$, $P_2'=\frac{P_1'}{1+\alpha}>\frac{P_1}{1+\alpha}=P_2$.
	
	According to the access function, for any member $i\leq k_2'$, $x_i'=\min(1, \frac{b_i}{P_1'})\leq \min(1,\frac{b_i}{P_1})=x_i$, $u_i'=v_i\cdot x_i'-\min(b_i,P_1')+\alpha\sum_{j=1}^{k_2'}(v_j\cdot x_j')\leq v_i\cdot x_i-min(b_i,P_1)+\alpha\sum_{j=1}^{k_2'}(v_j\cdot x_j)=u_i$; for any member $k_2'<i\leq n$, $x_i'=0\leq x_i$, $u_i'=0\leq u_i$. Thus, $x_i\geq x_i'$ and $u_i\geq u_i'$ for any member $i\in[n]$.
	$\hfill\square$
\end{proof}

\medskip

\begin{proof}[of Theorem \ref{thm4}]
	We divide the proof into six parts.
	\\
	
	\noindent{\em 1. Individual rationality.}
	
	For any bidder $i$ in DAO $G$, consider the scenario when $b_i=v_i$.
	We assume $G$ is the winning DAO, otherwise, $u_i=0$. Consider the following cases.
	
	\begin{itemize}
		\setlength{\itemindent}{2em}
		\item[Case 1:] $i\leq k_1$. In this case, $v_i\geq P_1$, $x_i=1$, $p_i=P_1$. Thus, $u_i=(1+\alpha)v_ix_i-p_i+ \alpha \sum_{j\neq i}{(v_j\cdot x_j)}\geq (1+\alpha)v_i-P_1 \geq 0$.
		
		\item[Case 2:] $k_1<i\leq k_2$. In this case, $P_1>v_i\geq \frac{P_1}{1+\alpha}$, $x_i=\frac{v_i}{P_1}$, $p_i=v_i$. Similarly,  we have $u_i\geq(1+\alpha)v_ix_i-p_i=(1+\alpha)v_i\cdot\frac{v_i}{P_1}-v_i =\frac{v_i}{P_1}[(1+\alpha)v_i-P_1]\geq 0$.
		
		\item[Case 3:] $i>k_2$. In this case, $x_i=0$, $p_i=0$, thus $u_i=0$.
	\end{itemize}
	In conclusion, $\forall i \in [n]$, $\forall b_{-i}$, $u_i(v_i, b_{-i})\geq 0$, individual rationality holds.
	\\
	
	\noindent{\em 2. Budget balance.}
	
	We only need to consider the winning DAO. According to Algorithm~\ref{alg3} and the cost-sharing function, $P_{total}=k_{1}\cdot P_1+\sum_{i=k_{1}+1}^{k_{2}}b_i =\sum_{i=1}^{k_{2}}p_i$, satisfying budget balance.
	\\
	
	\noindent{\em 3. Equal treatment.}
	
	For any two members $i\neq j$ with $b_i=b_j$, wlog, we assume $i<j$.
	
	\begin{itemize}
		\setlength{\itemindent}{2em}
		\item[Case 1:] $j\leq k_1$. In this case, $b_i=b_j\geq P_1$, so $i\leq k_1$, $x_i=x_j=1$, $p_i=p_j=P_1$.
		
		\item[Case 2:] $k_1<j\leq k_2$. In this case, $P_2\leq b_i=b_j< P_1$, so $k_1<i\leq k_2$, $x_i=\frac{b_i}{P_1}=\frac{b_j}{P_1}=x_j$, $p_i=b_i=b_j=p_j$.
		
		\item[Case 3:] $j> k_2$. In this case, $b_i=b_j<P_2$, so $i>k_2$, $x_i=x_j=0$, $p_i=p_j=0$.
	\end{itemize}
	Thus, if $b_i=b_j$, then $x_i=x_j$ and $p_i=p_j$, $\forall i,j\in[n]$, $i\neq j$. Namely, $M_1$ satisfies equal treatment.
	\\
	
	\noindent{\em 4. Underbid is weakly dominated by the truthful bid.} 
	
	For any bidder $i$ in DAO $G$, for any fixed $b_{-i}$, we analyze two scenarios: Scenario~1 refers to $i$ bidding truthfully, $b_i=v_i$; Scenario~2 is that $i$ intentionally lowers its bid to $b_i' = v_i - \delta$ with $0 < \delta < v_i$. Let the access, payment and utility of $i$ in these two scenarios be denoted as $x_i$, $p_i$, $u_i$ and $x_i'$, $p_i'$, $u_i'$. The willingness to pay of the DAO is denoted as $WTP_{total}$ and $WTP_{total}'$, respectively.
	
	According to the aggregation function, $WTP_{total}\geq WTP_{total}'$. Thus, we only need to consider the case where $G$ can win in both scenarios, otherwise, by individual rationality proved above, we have $u_i\geq 0=u_i'$, underbid is dominated.
	
	Since $b_{-i}$ is fixed, the required bid for DAO $G$ to win and its total payment $P_{total}$ after winning remain unchanged.
	And the utility of $i$ can be seen as a function of its own bid and $b_{G\setminus\{i\}}$, where $b_{G\setminus\{i\}}$ is the bids of all members of $G$ except for $i$.
	We further denote the allocation parameters output by Algorithm~\ref{alg3} after winning as $(P_1, P_2, k_1, k_2)$ and $(P_1', P_2', k_1', k_2')$, respectively, which also represent the widest feasible pool in each scenario. We distinguish three cases.
	\begin{itemize}
		\setlength{\itemindent}{2em}
		\item[Case 1:] $i\leq k_1$.
		In Scenario~1, $i$ can gain full access, $v_i\geq P_1=p_i$, $x_i=1$, $u_i=v_i\cdot1-P_1+\alpha \sum_{j=1}^{k_1}v_j+\alpha\sum_{j=k_1+1}^{k_2}(v_j\cdot\frac{b_j}{P_1})$.
		According to individual rationality, we have $u_i \geq 0$, so in Scenario~2, there is no need to consider the case where $b_i'$ is too small to obtain any access and thus $u_i'=0$.
		If $k_2'<k_2$, for $P_{total},\ b_i,\ b_{G\setminus\{i\}}$ in Scenario~1, $b_i\geq b_i'$ and others' bids remain the same as in Scenario~2, Algorithm~\ref{alg3} should output $k_2'$ instead of $k_2$ as the width of the widest feasible pool, which is a contradiction.
		Thus, $k_2'\geq k_2$ and there remain two subcases for Scenario~2.
		
		\begin{itemize}
			\setlength{\itemindent}{3em}
			\item[Subcase 1.1:] $b_i' \geq P_1$.
			In this case, $P_{total}=k_1\cdot P_1+\sum_{j=k_1+1}^{k_2}b_j$, so the former parameter in Scenario~1 is still feasible in Scenario~2. It is also the optimal parameter, otherwise, a wider pool with width $k_2'>k_2$ is feasible in Scenario~1. So we have $(P_1', P_2', k_1', k_2')=(P_1, P_2, k_1, k_2)$, $b_i'\geq P_1'$, $x_i' = 1$, $u_i' = u_i$, underbid is meaningless.
			
			\item[Subcase 1.2:] $b_i' < P_1$.
			In this case, given $P_{total},\ b_i',\ b_{G\setminus\{i\}}$, from Lemma~\ref{lemma2}, we know that as the selected $k_2'$ decreases, $P_1'$ must increase. Therefore, $P_1'$ reaches its minimum when $k_2' = k_2$.
			Even if $k_2' = k_2$, we have $P_{total} = k_1 \cdot P_1 + \sum_{j=k_1+1}^{k_2} b_j > (k_1-1) \cdot P_1 + b_i' + \sum_{j=k_1+1}^{k_2} b_j$. To be feasible, $P_1'$ must be greater than $P_1$.
			So, $b_i' < P_1'$, $x_i' = \frac{b_i'}{P_1'} < 1$.
			By access function, for any other member $j$ in the DAO, we have $x_j' \leq x_j$ since its bid remains unchanged but the ``water line'' rises, the amount of access it receives will not increase.
			Thus, $u_i'=v_i\cdot x_i'-P_1'+\sum_{j=1}^{n}v_j\cdot x_j'<v_i-P_1+\sum_{j=1}^{n}v_j\cdot x_j=u_i$.
		\end{itemize}
		
		\item[Case 2:] $k_1<i\leq k_2$.
		In Scenario~1, $i$ can gain partial access, $v_i < P_1$, $x_i<1$, $p_i=v_i$, $$u_i=(1+\alpha)v_i\cdot\frac{v_i}{P_1}-v_i+\alpha \sum_{j=1}^{k_1}v_j+\alpha\sum_{\substack{j=k_1+1\\j\neq i}}^{k_2}(v_j\cdot\frac{b_j}{P_1}).$$
		In Scenario~2, similar to the previous case where $b_i' < P_1$, we can prove that $k_2'\leq k_2$, $P_1'>P_1$ and for all member $j$ in $G$, $x_j'$ will not increase as long as $P_1'$ increases. The utility $u_i'=v_i\cdot\frac{b_i'}{P_1'}-b_i'+\alpha \sum_{j=1}^{n}v_j\cdot x_j'$.
		
		Therefore, when fixing $b_i'$, and considering $u_i'$ as a function of $P_1'$, $u_i'$ is monotonically decreasing with respect to $P_1'$.
		To prove that $u_i'$ will not exceed $u_i$, we only need to consider the case when $P_1'$ is at its minimum.
		Given $P_{total}$, $b_i'$, and $b_{G\setminus\{i\}}$, from Lemma~\ref{lemma2}, we know that as $k_2'$ decreases, $P_1'$ must increase. Therefore, assume that $k_2'$ takes the maximum value $k_2$. From the definition of $k_1$, a decrease in $k_1'$ means an increase in $P_1'$. Since $P_1' > P_1>v_i$, we have $k_1' \leq k_1$. Hence, we assume that $k_1'$ takes the maximum value $k_1$.
		
		In conclusion, we only consider where $k_1'= k_1$, $k_2'= k_2$, thus $$u_i'=(1+\alpha)v_i\cdot\frac{b_i'}{P_1'}-b_i'+\alpha \sum_{j=1}^{k_1}v_j+\alpha\sum_{\substack{j=k_1+1\\j\neq i}}^{k_2}(v_j\cdot\frac{b_j}{P_1'}).$$
		Through $P_{total}=k_1\cdot P_1+v_i+\sum_{\substack{j=k_1+1\\j\neq i}}^{k_2}b_j=k_1\cdot P_1'+b_i'+\sum_{\substack{j=k_1+1\\j\neq i}}^{k_2}b_j$, we have $P_1'= P_1+\frac{b_i'-v_i}{k_1}=P_1+\frac{\delta}{k_1}$. Thus, we denote $ \Delta u_i(\delta)$ and $f(\delta)$ as $ \Delta u_i(\delta)=u_i'-u_i\leq (1+\alpha)[\frac{v_i(v_i-\delta)}{P_1+\frac{\delta}{k_1}}-\frac{v_i^2}{P_1}]+\delta=\frac{f(\delta)}{P_1 (P_1+\frac{\delta}{k_1})}$. Then we have $f(\delta)=\frac{P_1}{k_1}\delta^2+[P_1^2-(1+\alpha)v_iP_1-\frac{(1+\alpha)v_i^2}{k_1}]\delta$, which is a quadratic function. So, to prove $f(\delta)\leq 0$, $\forall \delta \in (0,v_i)$, it is sufficient to prove $f(v_i)\leq 0$.
		While $f(v_i)=v_i[\frac{P_1}{k_1}v_i+P_1^2-P_1(1+\alpha)v_i-\frac{1+\alpha}{k_1}v_i^2]=v_i[P_1-(1+\alpha)v_i](P_1+\frac{v_i}{k_1})$, due to $i\leq k_2$, we have $v_i\geq P_2=\frac{P_1}{1+\alpha}$, which means $P_1-(1+\alpha)v_i\leq 0$, so $f(v_i)\leq 0$.
		
		\item[Case 3:] $i>k_2$.
		In Scenario~1, $i$ fails to gain access, $v_i<P_2$, $x_i=0$, $u_i=0$. Thus, $\forall b_i'<v_i$, $x_i'=0$, $u_i'=0$, underbid is meaningless.
	\end{itemize}
	Combining the above three cases, we have $u_i(v_i, b_{-i}) \geq u_i(b_i', b_{-i})$, $\forall b_i'< v_i$, $\forall i\in [n]$, $\forall b_{-i}$.
	\\
	
	\noindent{\em 5. Overbid is weakly dominated by the truthful bid in generic games.}
	
	We follow the notation above except for $b_i'=v_i+\delta$ with $\delta>0$ in Scenario~2. According to the aggregation function, $WTP_{total}'\geq WTP_{total}$, so we assume $G$ wins in Scenario~2, otherwise, $u_i'=u_i=0$.
	Below we consider two cases.
	
	\begin{itemize}
		\setlength{\itemindent}{2em}
		\item[Case 1:] $G$ wins in Scenario~1.
		After overbid, the pool in Scenario~1, whose width is $k_2$, is still feasible to accommodate the total amount of $P_{total}$ in Scenario~2.
		If $i\leq k_1$, we can prove that $k_2'=k_2$, the group of optimal parameters remains the same, so the allocation and utility remain the same.
		But if $i > k_1$, the overbid may help a wider pool increase its capacity and become feasible. On this occasion, Algorithm~\ref{alg3} will choose the widest pool, so the output will be $k_2' \geq k_2$ and $u_i'$ may not be the same as $u_i$.
		Thus, we consider the case where $i > k_1$ and divide it into three subcases.
		
		\begin{itemize}
			\setlength{\itemindent}{3em}
			\item[Subcase 1.1:] $i \leq k_1'$.
			In this case, $x_i'=1$, $p_i'=P_1'$. If $p_i' > v_i$, then $u_i'=-\infty$, otherwise, $p_i' \leq v_i<b_i'$. Since $P_{total}$ remains unchanged, $(P_1', P_2', k_1', k_2')$ is still feasible in Scenario~1 where $p_i=p_i'\leq v_i$, $i$ could gain full access, contradicting the condition $i > k_1$ which means $i$ cannot gain full access in Scenario~1.
			
			\item[Subcase 1.2:] $k_1' < i \leq k_2'$. In this case, $i$ has to pay $p_i' = b_i' > v_i$, thus $u_i'=-\infty$.
			
			\item[Subcase 1.3:] $i > k_2'$. In this case, $x_i'=0$ and $u_i'=0$, overbid is meaningless.
		\end{itemize}
		
		\item[Case 2:] $G$ loses in Scenario~1.
		We denote the winning bid in Scenario~1 as $WTP_{1}$. In generic games, there is no critical tie and we have $WTP_{total}'> WTP_{1}> WTP_{total}$.
		In Scenario~1, $u_i=0$.
		In Scenario~2, $G$ should pay $WTP_{1}$ as it becomes the second highest bid.
		If $p_i'>v_i$, then $u_i'=-\infty$ and we are done.
		Otherwise, $p_i'\leq v_i<b_i'$ and we have $k_1'<i\leq k_2'$. According to the aggregation function, we can prove that in Scenario~1, $WTP_{total}\geq\sum_{j=1}^{k_2'}\min\{b_j, (1+\alpha)b_{k_2'}\}\geq\sum_{j=1}^{k_2'}\min\{b_j, P_1'\}=WTP_{1}$, which is also a contradiction.
	\end{itemize}
	In sum, in generic games, overbid is weakly dominated by the truthful bid.
	\\
	
	\noindent{\em 6. $WTP_{total}$, $SW$ and excludability in comparison with $M_\ell$.}
	
	Denote that in the mechanism $M_\ell$, DAO $G$'s willingness to pay is $WTP_{total}'=i_0\cdot v_{i_0}$, $i_0\in[n]$. In the new mechanism $M_1$, $WTP_{total}\geq \sum_{i=1}^{i_0}\min\{v_i,(1+\alpha)v_{i_0}\}\geq i_0\cdot v_{i_0}=WTP_{total}'$.
	
	Denote $i^*_0 = \max_{i\in[n]} \{i | i \cdot v_i \geq P_{total}\}$ in the mechanism $M_\ell$. Then, in Algorithm~\ref{alg3}, using $i^*_0$ as the width limit, we can find a pool whose capacity is greater than or equal to $P_{total}$. At this point, $P_1 \leq v_{i^*_0}$, and $k_2 = i^*_0$. This is in fact the allocation from the mechanism $M_\ell$, denoted by $x'$. If the allocation $x$ obtained in the mechanism $M_1$ by Algorithm~\ref{alg3} is different from $x'$, it must correspond to a wider pool. According to Lemma~\ref{lemma2} and the access function, it is less-excludable than the allocation $x'$. Therefore, $\forall i \in [n], x_i \geq x_i'$,
	and $SW = \sum_{i=1}^{n} v_i \cdot x_i \geq \sum_{i=1}^{n} v_i \cdot x_i' = SW'$.
	
	\medskip
	Combining the above six parts together, Theorem \ref{thm4} holds.
	$\hfill\square$
\end{proof}




\end{document}